\DeclareRobustCommand\onedot{\futurelet\@let@token\@onedot}
\def\@onedot{\ifx\@let@token.\else.\null\fi\xspace}
\def\st{$s-t~$}
\def\PP{\mathcal P}
\theoremstyle{definition}
\newtheorem{defn}{Definition}
\theoremstyle{plain}
\newtheorem{thm}{Theorem}
\newtheorem{lem}{Lemma}
\theoremstyle{remark}
\newtheorem*{rmk}{Remark}
\newcommand*{\rom}[1]{\expandafter\@slowromancap\romannumeral #1@}
\title{On the Robustness of Distributed Computing Networks}
\author{Jianan Zhang, Hyang-Won Lee, and Eytan Modiano
\thanks{
Part of the material in this paper was presented at the International Conference on the Design of reliable communication networks (DRCN), 2019.

J. Zhang and E. Modiano are with the Laboratory for Information and Decision Systems, Massachusetts Institute of Technology, USA.

H.-W. Lee is with the Department of Software, Konkuk University, Republic of Korea.

This work was supported by DTRA grants HDTRA1-13-1-0021 and HDTRA1-14-1-0058, and NSF grant CNS-1617091. The work of Hyang-Won Lee was supported by the National Research Foundation of Korea (NRF) grant funded by the Korea government (MSIT) (No.2018R1D1A1B07048388).
}}
\begin{document}
\maketitle
\begin{abstract}
Traffic flows in a distributed computing network require both transmission and processing, and can be interdicted by removing either communication or computation resources. We study the robustness of a distributed computing network under the failures of communication links and computation nodes. We define cut metrics that measure the connectivity, and show a non-zero gap between the maximum flow and the minimum cut. Moreover, we study a network flow interdiction problem that minimizes the maximum flow by removing communication and computation resources within a given budget. We develop mathematical programs to compute the optimal interdiction, and polynomial-time approximation algorithms that achieve near-optimal interdiction in simulation.
\end{abstract}

\section{Introduction}
Cloud computing has been growing rapidly in recent years. For example, over one millon servers have been deployed for Amazon Web Service, which generates billions of dollars in revenue each year and grew by over 40 percent in revenue in 2018. Cloud networks, and computing networks in general, facilitate agile, reliable and cost effective implementations for a variety of applications. The robustness of computing networks is essential for web access, online database, video streaming, among other applications deployed in the cloud.

Network flows in a computing network rely on both communication resources for transmission and computation resources for processing. The unavailability of either type of resources may lead to the failure of flows. Hundreds of thousands of websites were down due to the computation resource failure in a data center for Amazon Web Service \cite{aws}. In 2006, Internet services in Asia were disrupted by communication failure due to the broken of submarine cables by earthquake \cite{earth}. 


The dependence of network flow on various types of resources brings challenges to the reliability of a computing network \cite{azodolmolky2013cloud,gill2011understanding, charikar2018multi}. Previous research proposed new computing network architectures to improve reliability \cite{mohammed2017failover, malik2011adaptive, cheraghlou2016survey}, and developed models to study failure cascading and protection strategies \cite{rao2012cloud,chauvel2015evaluating}. However, limited works focus on the rigorous analysis of network flow reduction under the failures of network resources, which is a key metric for computing network performance and is the focus of this paper.

Flow interdiction problems have been extensively studied based on the classical flow network model. The problem of minimizing the maximum flow by removing network links within a budget is strongly NP-hard \cite{wood1993deterministic}. Integer linear programs were developed to compute the optimal interdiction \cite{wood1993deterministic}. Approximation hardness results and a $2(n-1)$-approximation algorithm was developed in \cite{chestnut2017hardness}. A pseudoapproximation algorithm was developed in \cite{burch2003decomposition} based on linear programming relaxation, and developed in \cite{mccormick2014discrete} with faster combinatorial algorithm implementation. NP-hardness result and a polynomial-time approximation scheme were developed for network flow interdiction on planer graphs \cite{phillips1993network,zenklusen2010network}.

In a traditional flow network, the maximum flow between a source-destination ($s-t$) pair equals the minimum cut, which is the minimum-capacity link removals that disconnect the \st pair \cite{ford1956maximal}. In a computing network, we show that there is a non-zero gap between the maximum flow and the minimum cut. 
The non-zero gap between maximum flow and minimum cut exists in a wide range of network interdiction scenarios, abstracted by the shared risk group model \cite{coudert2007shared}, where a single failure event may destroy multiple network components. For example, in layered communication networks, such as IP-over-WDM networks, the failure of a physical link may affect multiple logical links, and the maximum number of failure-disjoint paths could be smaller than the minimum number of physical link failures that induce a cut \cite{lee2011cross, hu2003diverse}. In geographically correlated failure models \cite{neumayer2011assessing, agarwal2013resilience, msongaleli2016disaster}, one geographical failure affects multiple nodes and links, and the minimum cut can also be greater than the maximum flow \cite{neumayer2015geographic}. Although seemingly unrelated, we show that a computing network can be analyzed using a layered graph where link failures are coupled, and thus the gap exists.

The main contributions of this paper are as follows. We propose a model for a computing network to characterize the dependence of network flow on both communication and computation resources. The model facilitates the analysis of computing network robustness, by integrating the modeling of the computation resource to a classical graph model. By extending the classical cut metric for a graph, we define cut metrics that characterize computing network robustness under the failures of communication and computation resources. We prove the computation complexity, and develop integer programs and approximation algorithms to compute the minimum cuts. Moreover, we formulate a maximum flow interdiction problem, where the objective is to minimize the maximum \st flow by removing network resources within a given budget. We prove the computation complexity, and develop exact and approximation algorithms to compute optimal interdiction strategies. A preliminary version of this paper appeared in \cite{zhang2019robustness}.

The rest of this paper is organized as follows. In Section \ref{sc6:model}, we introduce the model for a distributed computing network, and define cut metrics to evaluate the network robustness. In Section \ref{sc6:eval}, we develop algorithms to evaluate the maximum flow and minimum cuts. In Section \ref{sc6:inter}, we formulate and solve a maximum flow interdiction problem with an interdiction budget. Section \ref{sc6:num} provides numerical results. Section \ref{sc6:conclusion} concludes the paper. 

\section{Model}
\label{sc6:model}
In this section, we develop a model for a distributed computing network, and define metrics for network robustness.

A distributed computing network is modeled by a directed graph $G(V,E)$, where $V$ denotes the set of forwarding and computation nodes, and $E$ denotes the set of communication links. Computation nodes can process and forward packets, while forwarding nodes can only forward packets. A computation node $u \in V$ has processing capacity $\mu_u$, and a communication link $(u,v) \in E$ has transmission capacity $\mu_{uv}$. 

Unlike the traditional data network where flows require minimal fixed computation tasks such as routing table lookup and checksum, flows in the distributed computing network can require vastly different computation resources, and hence computation capacities at servers (as well as communication bandwidth) are essential to process traffic. The classical robustness metric such as minimum cut is not able to capture the robustness of such a computing network. We extend classical flow and cut metrics to computing networks, to characterize the need to incorporate both communication and computation resources in network operation.

We first define \emph{computation path} which supports both the processing and the delivery of data packets in the network.
\begin{defn}
  A \emph{computation path} $(P,w)$ from a source $s$ to a destination $t$ is characterized by a sequence of connected edges (and their end nodes) $P$ that start at $s$ and end at $t$, and includes a computation node $w \in P$.
\end{defn}

A network flow consists of packets that are originated at a source, processed at one or more computation nodes, and delivered to a destination. A flow can be decomposed into computation paths. We illustrate an \st flow and computation paths decomposition using an example in Fig. \ref{fig6:eg}, where computation nodes are illustrated by squares and forwarding nodes are illustrated by circles, and the numbers represent capacities. The maximum \st flow is four, and can be decomposed into one unit flow on each of the four computation paths $(\{s-u_1-v_1-t\},u_1),(\{s-u_1-v_1-t\},v_1), (\{s-u_2-v_2-t\},u_2),(\{s-u_2-v_2-t\},v_2)$.

\begin{figure}[h]
\centering
\includegraphics[width=.59\linewidth]{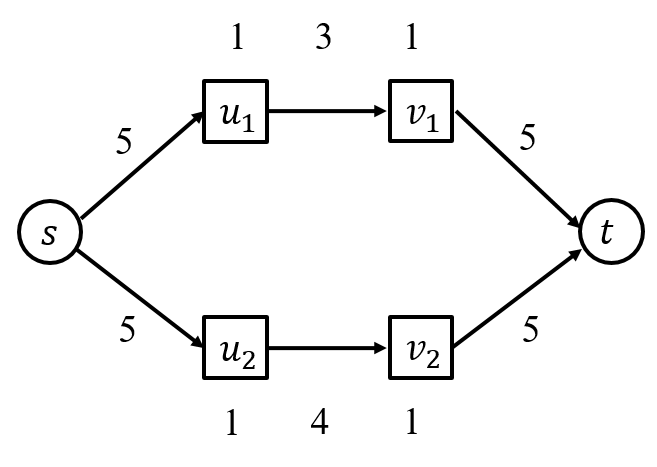}
\caption{An example of a distributed computing network.}
\label{fig6:eg}
\end{figure}

In order to reduce the flow carried by a computation path to zero, either any communication link or the computation resource in the path should be removed. Note that we consider the removal of computation resources without removing the node, i.e., the node can still forward packets without processing them.

In general, there are multiple computation paths from a source to a destination. To interdict the flow, a combination of communication and computation resources can be removed. We next define cuts that measure the connectivity of a pair of nodes in a computing network.

\begin{defn}
  A \emph{communication cut} for an \st pair is a set of communication links $E_c$ such that the \st flow is reduced to zero after removing $E_c$.
\end{defn}

Note that a communication cut can be different from the classical edge cut, since a path with zero computation resource does not need to be disconnected.

\begin{defn}
  A \emph{computation cut} for an \st pair is a set of computation nodes $V_c$ such that the \st flow is reduced to zero after removing the computation resources at $V_c$.
\end{defn}

\begin{defn}
  A \emph{joint communication and computation cut} (abbr. \emph{joint cut}) for an \st pair is a set of communication links $E_c$ and computation nodes $V_c$ such that the \st flow is reduced to zero after removing $E_c$ and computation resources at $V_c$.
\end{defn}

We illustrate these cuts using the example in Fig. \ref{fig6:eg}. Edges $\{(u_1,v_1), (u_2,v_2)\}$ form a communication cut, since $s$ and $t$ are disconnected after removing the two links. Nodes $\{u_1, v_1, u_2, v_2\}$ form a computation cut, since no flow can be processed after removing the computation resources at the four computation nodes. The union of edge $\{(u_1,v_1)\}$ and nodes $\{u_2,v_2\}$ is a joint cut, since the upper path is disconnected after removing $(u_1,v_1)$, and the lower path cannot process flow after removing the computation resources at nodes $\{u_2,v_2\}$. 

To simplify the analysis for network robustness, we assume that \emph{all flows have the same resource requirement,} i.e., every unit flow requires a fixed amount of communication and computation resources.
The identical resource requirement of flows can be justified by the statistical multiplexing of individual flows in a network, although individual flows for different applications may have different resource requirements. For example, video streaming is communication intensive, while search is computation intensive. By normalizing units and ignoring flow scaling, we further assume that \emph{every unit flow requires one unit of computation resource for processing, and outputs one unit processed flow.} Under this assumption, one unit flow on a computation path occupies one unit communication resource at every link along the path, and one unit computation resource at the computation node.

Before developing algorithms to evaluate the maximum flow and the minimum cuts, we prove the complexity of computing the cut metrics.
First, we show the hardness of evaluating the minimum communication cut, whose proof is in the Appendix.
\begin{lem} \label{th:comm}
  Computing the minimum communication cut for an \st pair is NP-hard, if there is more than one computation node.
\end{lem}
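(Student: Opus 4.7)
The plan is to reduce from the NP-hard \textsc{Minimum Vertex Cover} problem. Given an undirected graph $H = (V, E)$ with $|E| \geq 2$, the goal is to construct in polynomial time a computing network $G$ with a single $s$-$t$ pair and $|E|$ computation nodes such that the minimum communication cut in $G$ has value equal to the minimum vertex cover of $H$.

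The construction introduces, for each vertex $v \in V$, two forwarding nodes $a_v$ and $b_v$ connected by a unit-capacity edge $e_v = (a_v, b_v)$, together with infinite-capacity edges $s \to a_v$ and $b_v \to t$. After fixing an arbitrary orientation of each undirected edge of $H$ as an ordered pair $(u,v)$, I add, for each such oriented edge, a computation node $w_{(u,v)}$ along with two infinite-capacity edges $b_u \to w_{(u,v)}$ and $w_{(u,v)} \to a_v$. Since the only finite-capacity edges in $G$ are the $|V|$ edges $\{e_v\}$, any minimum communication cut $E_c$ may be assumed without loss of generality to consist entirely of a subset of these unit-capacity edges.

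The crux is a simple structural observation I would verify: for each $v$, the node $b_v$ has $e_v$ as its only incoming edge and the node $a_v$ has $e_v$ as its only outgoing edge. Consequently, for every computation node $w_{(u,v)}$, the source $s$ reaches $w_{(u,v)}$ in $G \setminus E_c$ if and only if $e_u \notin E_c$, and $w_{(u,v)}$ reaches $t$ if and only if $e_v \notin E_c$. Hence $w_{(u,v)}$ lies on some $s$-$t$ path in $G \setminus E_c$ exactly when neither $e_u$ nor $e_v$ is in $E_c$, so $E_c$ is a communication cut precisely when $C := \{v : e_v \in E_c\}$ is a vertex cover of $H$. Minimizing $|E_c|$ therefore coincides with \textsc{Minimum Vertex Cover} on $H$, which yields NP-hardness.

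The main obstacle I anticipate is ruling out spurious zig-zagging $s$-$t$ paths that traverse several computation nodes (for instance $s \to a_{u_0} \to b_{u_0} \to w_{(u_0,u_1)} \to a_{u_1} \to b_{u_1} \to w_{(u_1,u_2)} \to \cdots \to t$), which could a priori allow some $w$ to be removed from all $s$-$t$ paths without cutting either of its two associated unit-capacity edges. The unique-incoming and unique-outgoing properties of $b_v$ and $a_v$ dispose of this cleanly, since any path reaching $w_{(u,v)}$ must enter through $b_u \to w_{(u,v)}$ and hence must already have passed through $e_u$, and symmetrically on the way out toward $t$. Once this short graph-theoretic check is in place, the reduction is routine and the lemma follows.
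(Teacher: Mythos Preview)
Your reduction from \textsc{Minimum Vertex Cover} is correct. The structural observation that $b_v$ has $e_v$ as its unique incoming edge and $a_v$ has $e_v$ as its unique outgoing edge does exactly what you claim: any $s$--$t$ path through $w_{(u,v)}$ is forced to use both $e_u$ and $e_v$, so the zig-zag concern dissolves and the correspondence between communication cuts and vertex covers is exact.

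The paper takes a genuinely different route. It reduces from \textsc{Exact Cover by 3-Sets} (following the multicut hardness construction of Yannakakis et al.), building two interleaved families of $s_1$--$t_1$ and $s_2$--$t_2$ paths that share the gadget edges $(u_i,v_i)$; crucially, only $s_2$ and $t_1$ are computation nodes. The resulting instance therefore has \emph{exactly two} computation nodes, which sharpens the lemma to the tight threshold implied by its hypothesis (since with one computation node the problem is polynomial, as the paper notes). Your construction, by contrast, introduces one computation node per edge of $H$, so while it cleanly proves NP-hardness for the class of instances with ``more than one'' computation node, it does not pin down hardness already at two. On the other hand, your gadget is considerably simpler and more transparent: the cut--cover bijection is immediate, there are no carefully tuned edge weights beyond unit and infinite, and the verification requires only the in/out-degree check you isolate. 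Both arguments establish the lemma as stated; the paper's buys a stronger quantitative statement, yours buys simplicity.
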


Lemma \ref{th:comm} implies that computing the minimum joint cut is NP-hard, since the minimum communication cut can be viewed as a special case of the minimum joint cut when the computation resources are abundant at computation nodes.
\begin{thm} \label{th:joint}
  Computing the minimum joint communication and computation cut for an \st pair is NP-hard, if there is more than one computation node.
\end{thm}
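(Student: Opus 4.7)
The plan is to prove Theorem \ref{th:joint} by a direct polynomial-time reduction from the minimum communication cut problem, whose NP-hardness is provided by Lemma \ref{th:comm}. The guiding intuition, indicated informally in the remark just before the theorem, is that minimum communication cut is the restriction of minimum joint cut to the regime where the computation capacities are so large that no optimal joint cut would ever pay to remove a computation resource.

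Formally, given an instance $G(V,E)$ of minimum communication cut with an $s$-$t$ pair, link capacities $\{\mu_{uv}\}_{(u,v)\in E}$, and at least two computation nodes, I would construct an instance $G'$ of minimum joint cut on the same graph and with the same link capacities, but with the processing capacity of every computation node set to $M := 1 + \sum_{(u,v) \in E} \mu_{uv}$. The construction is polynomial in the size of $G$. For correctness I would argue two directions. First, any communication cut $E_c$ for $G$ yields the joint cut $(E_c, \emptyset)$ for $G'$ of identical total cost, so the optimum joint cut is at most the optimum communication cut. Second, any joint cut $(E_c, V_c)$ with $V_c \neq \emptyset$ has cost at least $M$, which strictly exceeds the cost of the trivial communication cut $E$; hence any optimum joint cut of $G'$ has $V_c = \emptyset$ and restricts to a communication cut of $G$ of the same cost. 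Combining the two bounds, the optima coincide, and a polynomial-time algorithm for minimum joint cut would solve minimum communication cut in polynomial time, contradicting Lemma \ref{th:comm}.

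I do not expect a substantive obstacle, since the reduction is essentially the formalization of the sentence already in the text. The only point that deserves a line of care is ensuring that the construction respects whichever convention the paper uses for measuring cut cost: if the minimum is taken by total capacity, the choice of $M$ above suffices; if it is measured by cardinality, the same reduction works after replacing each computation node of $G'$ by a cluster of $|E|+1$ parallel copies along every path passing through it, so that removing any one computation resource is strictly more expensive than cutting every communication link. In either case Theorem \ref{th:joint} follows immediately from the reduction together with Lemma \ref{th:comm}.
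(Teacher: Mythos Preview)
Your reduction is correct and is exactly the formalization of the paper's one-sentence argument preceding the theorem: the paper simply observes that the minimum communication cut is a special case of the minimum joint cut when computation resources are abundant, and your choice of $M = 1 + \sum_{(u,v)\in E}\mu_{uv}$ makes this precise under the paper's capacity-weighted cost convention (so the cardinality variant you sketch is unnecessary here). There is nothing to add---this is the same approach as the paper, just spelled out.
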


\section{Computation of max-flow and min-cuts}
\label{sc6:eval}
In this section, we study the computation of the maximum flow and minimum cuts for a source-destination pair. We develop polynomial-time algorithms to evaluate the maximum flow and the minimum computation cut, and integer programs to evaluate the minimum communication cut and the minimum joint cut. In Section \ref{sc6:path}, we develop mathematical programs to evaluate the maximum flow and the minimum cut using path-based formulation, which are intuitive but have an exponential number of variables or constraints. In Section \ref{sc6:layer}, we develop a layered graph representation to simplify their computations, and develop mathematical programs of polynomial sizes. Finally, in Section \ref{sc6:mfmc}, we study the gap between the maximum flow and the minimum cuts.

\subsection{Path-based formulations} \label{sc6:path}
We first develop mathematical programs to compute the maximum flow and minimum cuts using path-based formulations. While the formulations have an exponential number of variables or constraints, they highlight the connections between flow and cuts in a computing network to those in a classical flow network.

We formulate a linear program to compute the maximum flow in a computing network. Let $\PP$ denote the set of \st paths. 
Let $x_{P,w}$ denote the amount of flow transmitted through path $P$ and processed at a computation node $w \in P$. The maximum flow can be computed by the following linear program.
\begin{align}
  \max ~~~& ~~~\sum_{P \in \PP, w \in P} x_{P,w} \label{eq:maxflow} \\
  \text{s.t.} ~~~ & \sum_{P \in \PP, w \in P: (u,v) \in P} x_{P,w} \leq \mu_{uv}, ~~~\forall (u,v) \in E, \label{eq:comm_cap} \\
  & \sum_{P \in \PP: w \in P} x_{P,w} \leq \mu_w, ~~~\forall w \in V, \label{eq:compcap} \\
  & x_{P,w} \geq 0, ~~~ \forall {P \in \PP, w \in P}. \nonumber
\end{align}

The communication capacity constraints are guaranteed by \eqref{eq:comm_cap}, and the computation capacity constraints are guaranteed by \eqref{eq:compcap}, by restricting the total amount of flow transmitted by a link or processed at a computation node. The objective is to maximize the total amount of flow supported by the computation paths. 

We then develop an integer program to evaluate the minimum joint communication and computation cut using the path-based formulation. Indicator variable $y_{uv}$ represents whether link $(u,v)$ is removed. Indicator variable $y_w$ represents whether the computation resource at node $w$ is removed. Constraint \eqref{eq:cut2} guarantees that for each path, either one of the link is removed, or all the computation resources are removed.

\begin{align}
  \min ~~~& ~~~\sum_{(u,v) \in E} \mu_{uv} y_{uv} + \sum_{w \in V} \mu_w y_w \label{eq:mincut}\\
  \text{s.t.} ~~~ & \sum_{(u,v) \in P} y_{uv} + y_w \geq 1, ~~~\forall P \in \PP, w \in P \label{eq:cut2} \\
  & y_{uv} \in \{0,1\}, ~~~ \forall (u,v) \in E \nonumber \\
  & y_w \in \{0,1\}, ~~~ \forall w \in V. \nonumber
\end{align}

Pure communication or computation cuts can be obtained by the above integer program with additional constraints. A minimum communication cut can be obtained by setting $y_w = 0, \forall w \in V$. A minimum computation cut can be obtained by setting $y_{uv} = 0, \forall (u,v) \in E$.

The number of paths $|\PP|$ can be exponential in the size of the network. Both the linear program \eqref{eq:maxflow} and the integer program \eqref{eq:mincut} have exponential sizes. Compared with the classical maximum flow and minimum cut formulations, the main difference is that a computation path in the computing network depends on a computation node in addition to a sequence of connected links. The coupling of constraints by the computation nodes brings challenges to the evaluation of the flow and cut metrics.

\subsection{Layered graph formulations}\label{sc6:layer}
To address the challenges, we develop a layered graph representation to simplify the evaluation of flow and cuts. Based on the layered graph, in Sections \ref{sc6:flow} and \ref{sc6:cut}, we develop modified mathematical programs with a polynomial number of variables and constraints to evaluate the maximum flow and the minimum cuts, respectively.

We consider a two-layer graph, where every layer has the same topology as the original graph. An edge connects the two copies of each computation node across the two layers. Unprocessed flows are transmitted thought links in the upper layer $G(V,E)$, while processed flows are transmitted in the lower layer $G'(V',E')$. Flows across the two layers represent processing at computation nodes. For example, in Fig. \ref{fig6:layer}, a flow is transmitted through $(s,u)$, processed at $u$, and then transmitted through $(u,v)$ and $(v,t)$. In the layered graph, unprocessed flow is transmitted through $(s,u)$ in the upper layer, then transmitted through $(u,u')$, which represents the processing at $u$, and finally transmitted through $(u',v')$ and $(v',t')$ in the lower layer. Every flow from $s$ to $t$ and processed at computation nodes in the original graph can be represented by a flow from $s$ to $t'$ in the layered graph. We next show that the network resource failures that disconnect $(s,t)$ in the computing network can be mapped to failures that disconnect $(s,t')$ in the layered graph.

\begin{lem}\label{th:layercut}
  Let $S$ be an \st cut in the computing network. In the layered graph, removing edges $S' = \{(u,v), (u',v')| (u,v) \in S\} \cup \{(w,w') | w \in S\}$ disconnects $s$ and $t'$.
\end{lem}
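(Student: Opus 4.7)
The plan is to prove Lemma~\ref{th:layercut} by contradiction: assume that after removing $S'$ from the layered graph, there is still a path $Q$ from $s$ to $t'$, and construct from $Q$ a computation path in the original computing network that survives the removal of $S$, contradicting the hypothesis that $S$ is an $s$--$t$ cut.

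First I would analyze the structure of any surviving $s$-to-$t'$ path $Q$. Because the upper layer contains only $V$-vertices, the lower layer only $V'$-vertices, and the only edges between the two layers are the crossing edges $(w,w')$ for computation nodes $w$, any simple path $Q$ from $s \in V$ to $t' \in V'$ must traverse exactly one crossing edge $(w^*, {w^*}')$ for some computation node $w^*$. Thus $Q$ decomposes as a subpath $Q_1$ in the upper layer from $s$ to $w^*$, followed by $(w^*, {w^*}')$, followed by a subpath $Q_2$ in the lower layer from ${w^*}'$ to $t'$.

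Next I would map $Q$ back to the original graph: let $P$ consist of the sequence of original links corresponding to $Q_1$, followed by the sequence of original links corresponding to $Q_2$ (each lower-layer edge $(u',v')$ is identified with its upper-layer twin $(u,v) \in E$). Then $(P, w^*)$ is a computation path from $s$ to $t$ with computation node $w^*$. I would then check that $(P, w^*)$ avoids $S$: any original link $(u,v) \in S$ would have both its copies $(u,v)$ and $(u',v')$ removed by the definition of $S'$, so neither $Q_1$ nor $Q_2$ could traverse it, hence $(u,v) \notin P$; and if $w^* \in S$, then the crossing edge $(w^*, {w^*}')$ would have been removed, contradicting that $Q$ uses it. So the computation path $(P, w^*)$ lies entirely outside $S$, contradicting that $S$ is an $s$--$t$ cut in the computing network.

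The main subtlety, and really the only place to be careful, is arguing that $Q$ uses exactly one crossing edge and that upper/lower edges correspond cleanly to a single original link — once the layered graph's bipartite-like structure is articulated, the rest of the argument is a direct translation between the two models. The mapping preserves the computation-path structure precisely because the layered construction was designed to encode the choice of processing location as the unique crossing between the layers.
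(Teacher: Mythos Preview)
Your proposal is correct and follows essentially the same approach as the paper: argue by contradiction, pick a surviving $s$--$t'$ path in the layered graph, decompose it at the (unique) crossing edge $(w^*,{w^*}')$, and map the upper- and lower-layer subpaths back to a computation path $(P,w^*)$ that avoids $S$. The only cosmetic difference is that the paper simply says the surviving path ``must contain a link from the upper layer to the lower layer'' and picks one, whereas you explicitly argue that exactly one crossing occurs; both versions suffice because the crossing edges are directed from the upper to the lower layer.
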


\begin{proof}
  We prove by contradiction. Suppose that a path $P$ exists between $s$ and $t'$ in the layered graph after removing $S'$. The path $P$ must contain a link from the upper layer to the lower layer, denoted by $(a, a')$. There is a path $P_1$ from $s$ to $a$ in the upper layer, and a path $P'_2$ from $a'$ to $t'$ in the lower layer. Let $P_2 =  \{(u,v)|(u',v') \in P'_2\}$. Since none of the edges $P_1 \cup P'_2$ belong to cut $S'$, none of the edges $P_1 \cup P_2$ belong to cut $S$, under the construction of $S'$.

  In the computing network, there is a path $P_1$ from $s$ to $a$, and a path $P_2$ from $a$ to $t$. Moreover, the computation resource at $a$ is not removed, since $(a, a')$ remains in the layered graph. The path $(P_1 \cup P_2 , a)$ is a computation path from $s$ to $t$, which contradicts with the fact that $S$ is an \st cut.
\end{proof}

\begin{figure}[h]
\centering
\includegraphics[width=\linewidth]{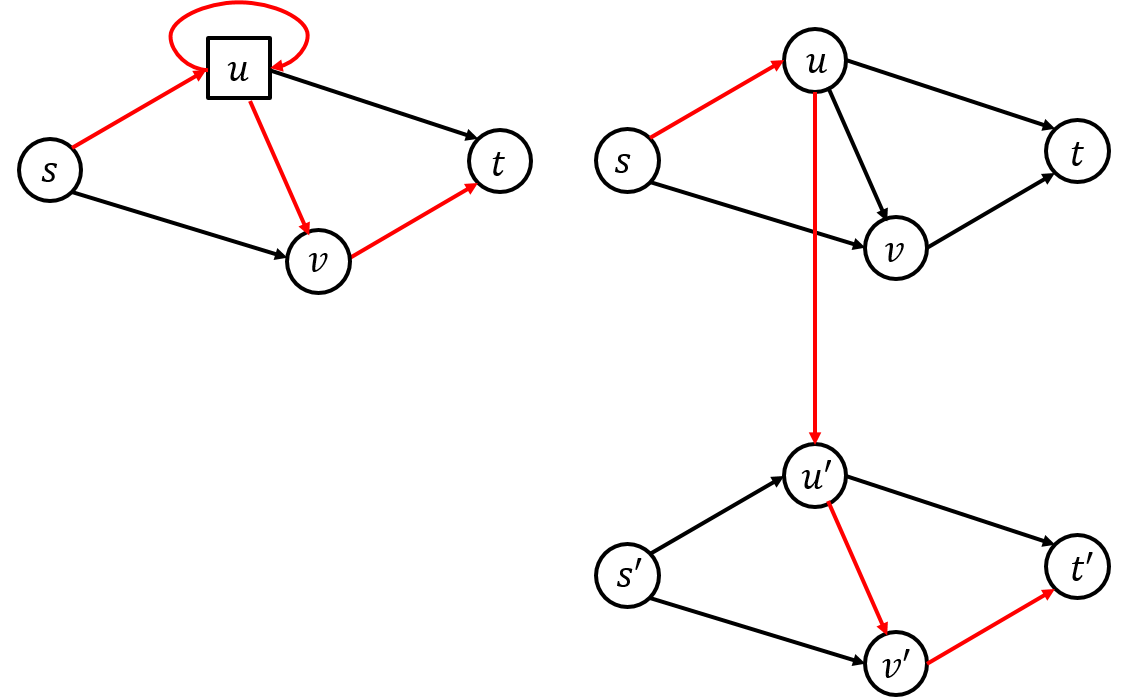}
\caption{Flows in the original and layered graphs.}
\label{fig6:layer}
\end{figure}

\subsubsection{Evaluation of maximum flow}
\label{sc6:flow}
Flow conservation holds in the layered graph, since communication and computation units are normalized and flow scalings are ignored. The difference between a flow in the layered graph and the classical network flow is that the sum of flows on the two copies of a link should not exceed the transmission capacity. Let $\tilde E$ denote the union of the set of links in the layered graph and a link from $t'$ to $s$ that has an infinite capacity. Let $\tilde V = V \cup V'$ denote the set of nodes in the layered graph. Let $f_{e}$ denote the amount of flow on link $e$ in the layered graph. The maximum flow from $s$ to $t'$ can be computed using the following linear program.

\begin{align}
  \max ~& ~~~f_{t's} \label{eq:flowpoly}\\
  \text{s.t.} ~ & \sum_{u \in \tilde V:(u,v) \in \tilde E}\hspace{-3mm} f_{uv} - \sum_{w \in \tilde V: (v,w) \in \tilde E} \hspace{-3mm} f_{vw} = 0, \forall v \in \tilde V, \label{eq:node0}\\
  & f_{ww'} \leq \mu_w, ~~~ \forall w \in V, \label{eq:cap0}\\
  & f_{uv} + f_{u'v'} \leq \mu_{uv}, ~~~ \forall (u,v) \in E, \label{eq:cap1}\\
  & f_{uv} \geq 0, f_{u'v'} \geq 0, ~~~ \forall (u,v) \in E, \nonumber\\
  & f_{ww'} \geq 0, ~~~ \forall w \in V. \nonumber
\end{align}

Flow conservation constraints are guaranteed by \eqref{eq:node0}. Computation capacity constraints are guaranteed by \eqref{eq:cap0} for each computation node. Communication capacity constraints are guaranteed by \eqref{eq:cap1} for each communication link. The linear program has $O(|E|)$ variables and $O(|E|)$ constraints, which has a significantly smaller size compared with the path-based formulation. To conclude, the maximum flow can be computed by the linear program in polynomial time.

\subsubsection{Evaluation of minimum cuts}\label{sc6:cut}
Recall that an \st flow can be interdicted by removing either communication or computation resources, or a combination of both. We first develop an integer program to compute the minimum joint communication and computation cut, which can be easily modified to compute the minimum communication cut and the minimum computation cut. The formulation is based on disconnecting $s$ and $t'$ in the layered graph, which equivalently reduces the \st flow to zero in the original graph by Lemma~\ref{th:layercut}.

We use different \emph{node potentials} to indicate the separation of nodes in $\tilde{V}$ after removing the joint communication and computation cut. The potential of a node can be interpreted as its distance to $t'$, where the edges in the cut have unit lengths and the remaining edges have zero lengths. Let $p_v$ indicate the potential of a node. Suppose that $p(s) - p(t') \geq 1$. There is no path between $s$ and $t'$ that only consists of zero-length edges. Therefore, $s$ and $t'$ are disconnected after removing the edges in the cut. Let $y_{uv}$ indicate whether link $(u,v)$ is removed. Let $y_w$ indicate whether the computation resource at node $w$ is removed. The node potential never decreases along a connected path, guaranteed by constraints \eqref{eq:edge1}, \eqref{eq:edge2}, and \eqref{eq:node} when $y_{uv}=0$ and $y_w=0$. Disconnected nodes may have different potentials, guaranteed by the same constraints when $y_{uv}=1$ or $y_w=1$. If all the constraints are satisfied, $s$ and $t'$ are disconnected, since the potential cannot decrease along a connected path. The cut include the communication links where $y_{uv} = 1$ and computation nodes where $y_w = 1$. Notice that if link $(u,v)$ is removed, no flow can pass through either $(u,v)$ or $(u',v')$. Therefore, $y_{uv}$ appears in both Eqs. \eqref{eq:edge1} and \eqref{eq:edge2}.

\begin{align}
  \min ~~~& ~~~\sum_{(u,v) \in E} \mu_{uv} y_{uv} + \sum_{w \in V} \mu_{w} y_w \label{eq:cutpoly}\\
  \text{s.t.} ~~~
  & p_v - p_u + y_{uv} \geq 0, ~~~ \forall (u,v) \in E, \label{eq:edge1}\\
  & p_{v'} - p_{u'} + y_{uv} \geq 0, ~~~ \forall (u,v) \in E, \label{eq:edge2}\\
  & -p_w + p_{w'} + y_w \geq 0, ~~\forall w \in V, \label{eq:node}\\
  & p_s - p_{t'} \geq 1, \nonumber \\
  & y_{uv} \in \{0,1\}, ~~~ \forall (u,v) \in E, \nonumber\\
  & y_w \in \{0,1\}, ~~\forall w \in V. \nonumber
\end{align}

To obtain the minimum computation cut, it suffices to set $y_{uv} = 0$ for all $(u,v) \in E$, and compute the optimal solution to the integer program. To obtain the minimum communication cut, it suffices to set $y_w = 0$ for all $w \in V$, and then compute the optimal solution to the integer program.

Since it is inefficient to compute the optimal solution of an integer program, we next develop a polynomial-time algorithm for evaluating the minimum computation cut, and approximation algorithms for evaluating the minimum communication cut and the joint cut.

\emph{Minimum computation cut: }
Since a flow needs to be processed by computation nodes along the paths from the source to the destination, removing all the computation resources along \st paths is sufficient and necessary to reduce the flow to zero. Such computation resources can be identified by computing the intersection of the set of nodes reachable from the source and the set of nodes that can reach the destination. Both sets can be computed by depth first search. The algorithm is summarized as follows, with time complexity $O(|E|)$.

\begin{algorithm}[h]
\caption{Algorithm for evaluating the minimum computation cut for an \st pair}
\begin{enumerate}
\item Compute the set of nodes $V_s$ such that there exists at least one path from $s$ to every node in $V_s$.
\item Compute the set of nodes $V_t$ such that there exists at least one path from every node in $V_t$ to $t$.
\item The minimum computation cut for the \st pair is $V_s \cap V_t$.
\end{enumerate}
\label{alg:comp}
\end{algorithm}

\emph{Minimum communication cut: }
If there is a single computation node $u$, then the minimum communication cut is the minimum of 1) the minimum cut that disconnects $s$ and $u$, and 2) the minimum cut that disconnects $u$ and $t$.

However, if there is more than one computation node, computing the minimum communication cut is NP-hard (Lemma \ref{th:comm}). Besides the integer program \eqref{eq:cutpoly}, we develop a 2-approximation algorithm, which runs in polynomial time and outputs a communication cut whose value is at most twice the minimum communication cut.

\begin{algorithm}[h]
\caption{Approximation algorithm for the minimum communication cut for an \st pair}
\begin{enumerate}
\item Construct a layered graph. Assign an arbitrarily high cost to every link across two layers. Assign $\mu_{uv}$ cost to each of the links $(u,v)$ and $(u',v')$.
\item Compute a minimum cut $C$ that separates $s$ and $t'$.
\item The communication cut is given by links $\{(u,v)| (u,v) \in C \text{~or~} (u',v') \in C \}$.
\end{enumerate}
\label{alg:comm}
\end{algorithm}

\begin{thm}\label{th:comm2}
  The communication cut obtained by Algorithm \ref{alg:comm} has a value that is at most twice the value of the minimum communication cut.
\end{thm}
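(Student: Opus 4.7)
The plan is to establish the 2-approximation by (i) verifying that the set $E_{\text{out}}$ output by Algorithm~2 is indeed a valid communication cut in the original computing network, and (ii) comparing the cost of the layered-graph min-cut $C$ against the cost induced by an optimal communication cut. Let $E_c^*$ denote the optimal (minimum) communication cut in the computing network, with value $\mu(E_c^*) = \sum_{(u,v) \in E_c^*} \mu_{uv}$, and let $E_{\text{out}} = \{(u,v) : (u,v) \in C \text{ or } (u',v') \in C\}$.

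First I would prove that $E_{\text{out}}$ is a communication cut. The key observation is that since the cross-layer edges $(w,w')$ carry an arbitrarily high cost in the construction, they never appear in the minimum cut $C$; thus every edge of $C$ is of the form $(u,v)$ or $(u',v')$ for some $(u,v) \in E$. Consider any computation path $(P,w)$ in the computing network. Its canonical image in the layered graph --- use the upper-layer copies of the edges of $P$ up to $w$, cross via $(w,w')$, and use the lower-layer copies from $w$ onward --- is an $s$--$t'$ path $Q$. Since $C$ separates $s$ from $t'$, $Q$ must contain some edge of $C$; by the observation, this edge is $(u,v)$ or $(u',v')$ for some $(u,v) \in P$. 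In either case $(u,v) \in E_{\text{out}}$, so $P$ shares an edge with $E_{\text{out}}$. Because every computation path is intersected, $E_{\text{out}}$ is a communication cut.

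Next I would establish the approximation ratio. By Lemma~\ref{th:layercut} applied to $S = E_c^*$ (viewed as a joint cut with no computation component), removing the set $S' = \{(u,v),(u',v') : (u,v) \in E_c^*\}$ disconnects $s$ and $t'$ in the layered graph. The cost of $S'$ in the layered graph is
\begin{equation*}
\mathrm{cost}(S') \;=\; \sum_{(u,v) \in E_c^*} \mu_{uv} + \sum_{(u,v) \in E_c^*} \mu_{uv} \;=\; 2\,\mu(E_c^*).
\end{equation*}
Since $C$ is a minimum cut in the layered graph, $\mathrm{cost}(C) \leq \mathrm{cost}(S') = 2\,\mu(E_c^*)$. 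On the other hand, each link $(u,v) \in E_{\text{out}}$ is contributed by at least one of $(u,v), (u',v') \in C$, and each of those two edges carries cost $\mu_{uv}$ in the layered graph. Hence
\begin{equation*}
\mu(E_{\text{out}}) \;=\; \sum_{(u,v) \in E_{\text{out}}} \mu_{uv} \;\leq\; \sum_{e \in C} \mathrm{cost}(e) \;=\; \mathrm{cost}(C).
\end{equation*}
Chaining the inequalities yields $\mu(E_{\text{out}}) \leq 2\,\mu(E_c^*)$, which is the claim.

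I expect the main obstacle to be the first step --- verifying that $E_{\text{out}}$ is actually a communication cut in the original network. Lemma~\ref{th:layercut} gives only one direction of the computing-network-to-layered-graph correspondence, so I would need to argue explicitly, via the computation-path-to-$s$-$t'$-path map, that no computation path survives the removal of $E_{\text{out}}$, while being careful that the high cost assigned to cross-layer edges forces $C$ to consist only of link copies. Once that is in place, the cost comparison is a short two-line calculation using Lemma~\ref{th:layercut} and the fact that $E_{\text{out}}$ counts each pair $\{(u,v),(u',v')\}$ at most once.
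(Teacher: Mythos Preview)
Your proposal is correct and follows essentially the same approach as the paper's proof: use Lemma~\ref{th:layercut} to show the doubled optimal cut $S'$ is an $s$--$t'$ cut of cost $2\mu(E_c^*)$, compare with the layered min-cut $C$, observe that the high cross-layer costs keep $C$ inside the link copies, and conclude $\mu(E_{\text{out}})\le \mathrm{cost}(C)\le 2\mu(E_c^*)$. The only difference is that you spell out explicitly, via the computation-path-to-$s$--$t'$-path map, why $E_{\text{out}}$ is a valid communication cut, whereas the paper asserts this in a single sentence.
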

\begin{proof}
  Let $S^*$ be the minimum \st communication cut, which has value $w$. By Lemma \ref{th:layercut}, in the layered graph, removing edges $S' = \{(u,v), (u',v')| (u,v) \in S^* \}$ disconnects $s$ and $t'$. The cost of $S'$ in the layered graph is $2w$.

  The minimum communication cut $C$ obtained by Algorithm \ref{alg:comm} has value at most $2w$, since $C$ is the minimum $s-t'$ cut in the layered graph and is no larger than $S'$. Note that $C$ contains no link across the two layers because every crossing link has an arbitrarily high cost. Consequently, $L=\{(u,v)| (u,v) \in C \text{~or~} (u',v') \in C \}$ is a communication cut in the original graph. Furthermore, the cost of removing links $L$ is no more than the cost of removing $C$. Therefore, $L$ has at most twice the value of the minimum communication cut.
\end{proof}

\emph{Minimum joint communication and computation cut: }
Algorithm \ref{alg:comm} can be modified to compute a joint cut whose value is at most twice the minimum joint cut. In the first step of Algorithm \ref{alg:comm}, instead of assigning an arbitrarily high cost to links across two layers, $\mu_w$ cost is assigned to link $(w,w')$. Using a similar proof to the proof of Theorem \ref{th:comm2}, we obtain the performance of the modified algorithm.
\begin{thm}
  The joint communication and computation cut obtained by the modified algorithm has a value that is at most twice the value of the minimum joint cut.
\end{thm}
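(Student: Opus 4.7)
The plan is to follow the structure of the proof of Theorem~\ref{th:comm2}, adapted to handle both components of the cut simultaneously. First I would fix an optimal joint cut $(S^*_E, S^*_V)$ of value $w = \sum_{(u,v) \in S^*_E} \mu_{uv} + \sum_{v \in S^*_V} \mu_v$. Lifting this cut to the layered graph via $S' = \{(u,v),(u',v') \,:\, (u,v) \in S^*_E\} \cup \{(w,w') \,:\, w \in S^*_V\}$ produces, by Lemma~\ref{th:layercut}, an edge set that separates $s$ from $t'$. Under the modified cost assignment, each communication link contributes twice to the layered cost (once per layer) while each cross-layer link $(w,w')$ contributes $\mu_w$ once, so $S'$ has layered cost $2\sum_{(u,v)\in S^*_E} \mu_{uv} + \sum_{w\in S^*_V} \mu_w \le 2w$.

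Next I would invoke optimality of the minimum $s$--$t'$ cut $C$ computed in step~2 of the modified algorithm: since $S'$ separates $s$ and $t'$, we have $\mathrm{cost}(C) \le \mathrm{cost}(S') \le 2w$. The central step is then to show that the projection of $C$ back to the original graph, which takes a communication link $(u,v)$ whenever $(u,v)$ or $(u',v')$ lies in $C$ and takes a computation node $w$ whenever $(w,w') \in C$, yields a valid joint cut. This is essentially the contrapositive of Lemma~\ref{th:layercut}: if some computation path $(P,w)$ survived in the original graph, then concatenating the upper-layer copy of $P$ from $s$ to $w$, the crossing edge $(w,w')$, and the lower-layer copy of $P$ from $w'$ to $t'$ would give an $s$--$t'$ walk in the layered graph avoiding every edge of $C$, contradicting the fact that $C$ is an $s$--$t'$ cut.

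Finally, I would bound the cost of the projected joint cut by $\mathrm{cost}(C)$. Each communication link $(u,v)$ taken into the projection contributes $\mu_{uv}$ in the original graph but contributed at least $\mu_{uv}$ (and possibly $2\mu_{uv}$) to $C$; each computation node $w$ taken into the projection contributes $\mu_w$ to both costs. Hence the original cost is at most $\mathrm{cost}(C) \le 2w$, proving the 2-approximation guarantee.

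The main obstacle is keeping the bookkeeping clean when mapping back and forth between the computing network and the layered graph, and in particular verifying that the projection always yields a valid joint cut without introducing stray edges from the two copies of a single link. Once the lifting/projection correspondence is stated carefully via Lemma~\ref{th:layercut} and its contrapositive, the factor of two falls out automatically from the duplication of communication links across layers, and the rest is a direct transcription of the argument used for Theorem~\ref{th:comm2}.
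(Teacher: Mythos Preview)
Your proposal is correct and follows exactly the approach the paper intends; indeed, the paper gives no explicit proof for this theorem beyond stating that it is ``a similar proof to the proof of Theorem~\ref{th:comm2},'' and your write-up supplies precisely those details (lifting the optimal joint cut to the layered graph at cost at most $2w$, invoking optimality of $C$, and projecting back). The only refinement you add beyond the paper is the explicit contrapositive argument that the projection of $C$ is a valid joint cut, which is straightforward and handled correctly.
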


\subsection{Relationship between max-flow and min-cuts}\label{sc6:mfmc}
The classical max-flow min-cut theorem states that the maximum amount of flow from $s$ to $t$ equals the value of the minimum cut that separates $s$ and $t$. In a computing network, we study the connections between maximum flow and various types of minimum cuts. Since either communication or computation could be the bottleneck to support a flow, the gap between the maximum flow and the minimum communication cut or the minimum computation cut could be arbitrarily large. For example, Fig. \ref{fig6:comm} illustrates that the gap between the minimum communication cut and the maximum flow can grow arbitrarily large as the communication bandwidth increases while the computation capacity stays the same, where the numbers adjacent to links and nodes represent the communication capacity and computation capacity, respectively. Similarly, Fig. \ref{fig6:comp} illustrates that the gap between the minimum computation cut and the maximum flow can be arbitrarily large.

\begin{figure}[h]
\centering
\includegraphics[width=.6\linewidth]{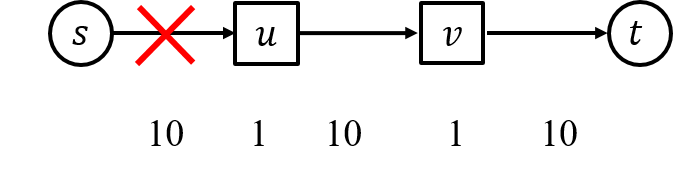}
\caption{Gap between the max flow and min communication cut: max flow $=$ 2, min communication cut $=$ 10.}
\label{fig6:comm}
\end{figure}

\begin{figure}[h]
\centering
\includegraphics[width=.59\linewidth]{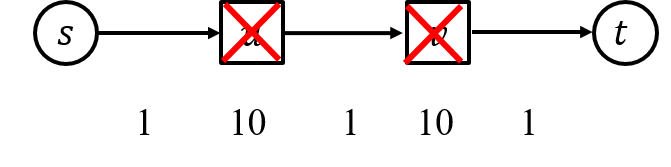}
\caption{Gap between the max flow and min computation cut: max flow $=$ 1, min computation cut $=$ 20.}
\label{fig6:comp}
\end{figure}

Since the joint communication and computation cut include pure communication cut and pure computation cut as special cases, the minimum joint cut is at most the smaller of the two pure cuts. In Fig. \ref{fig6:comm}, the minimum joint cut is 2, by removing the two units computation resources, while in Fig. \ref{fig6:comp}, the minimum joint cut is 1, by removing any one of the three communication links. Note that the joint cut can be smaller than both pure cuts. For example, consider two paths in parallel between $s$ and $t$, illustrated by Figs. \ref{fig6:comm} and \ref{fig6:comp}, respectively. The minimum joint cut is 3, while the minimum communication cut is 11 and the minimum computation cut is 22.

The following theorem bounds the gap between the maximum flow and the minimum joint cut.

\begin{thm}\label{th:gap}
  The minimum value (cf. \eqref{eq:cutpoly}) of the joint communication and computation cut is at most twice the maximum flow between a source-destination pair.
\end{thm}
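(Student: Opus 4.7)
The plan is to construct an explicit integral joint cut whose value is at most twice the maximum flow, by passing through an auxiliary max-flow problem on the layered graph of Section~\ref{sc6:layer}. Concretely, consider the layered graph $\tilde G = (\tilde V, \tilde E)$ in which each copy of a communication link carries capacity $\mu_{uv}$ \emph{independently} (rather than being coupled by $f_{uv}+f_{u'v'}\le \mu_{uv}$), while each computation link $(w,w')$ retains capacity $\mu_w$. Let $F'$ denote the maximum $s$-$t'$ flow in this auxiliary graph and let $F$ denote the maximum flow of the computing network, i.e., the optimum of \eqref{eq:flowpoly}.

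First I would show $F'\le 2F$. Given any feasible flow on $\tilde G$, scaling every flow value by $1/2$ produces a feasible solution to \eqref{eq:flowpoly}: flow conservation is homogeneous under scaling, the computation capacities at each $w$ only become slacker, and the coupled constraint $\tfrac12 f_{uv}+\tfrac12 f_{u'v'}\le \mu_{uv}$ follows from $f_{uv}\le \mu_{uv}$ and $f_{u'v'}\le \mu_{uv}$ on $\tilde G$. Hence $F'/2\le F$.

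Next, I would apply the classical max-flow min-cut theorem on $\tilde G$ to obtain a minimum $s$-$t'$ cut $C\subseteq \tilde E$ of total capacity exactly $F'$, and then map $C$ back to the computing network by removing $(u,v)\in E$ whenever $(u,v)\in C$ \emph{or} $(u',v')\in C$, and removing the computation resource at $w$ whenever $(w,w')\in C$. The cost of this joint cut is at most $F'$, since each original communication link is charged $\mu_{uv}$ only once even when both of its copies lie in $C$, while computation-link costs match exactly. Validity of the resulting pair $(E_c,V_c)$ as a joint cut follows from an argument in the spirit of Lemma~\ref{th:layercut}: if some computation path $(P,w)$ survived in the computing network, then $(w,w')\notin C$, and $(u,v)\notin C$ for every link of $P$ preceding $w$, and $(u',v')\notin C$ for every link of $P$ following $w$; concatenating the upper-layer image of the prefix, the crossing edge $(w,w')$, and the lower-layer image of the suffix would yield a surviving $s$-$t'$ path in $\tilde G$, contradicting the cut property of $C$.

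Combining the two inequalities gives min joint cut $\le F'\le 2F$, which is the claim. The main subtlety (rather than a genuine obstacle) is pinpointing where the factor of $2$ enters: it is precisely the slack between the coupled capacity $f_{uv}+f_{u'v'}\le \mu_{uv}$ in the computing network and the pair of independent capacities $\mu_{uv}$ on the two copies of $(u,v)$ in $\tilde G$. This is what lets us invoke standard max-flow min-cut on $\tilde G$ in the second step at the price of a factor of $2$ introduced in the first; an essentially equivalent route would be to take the LP relaxation of \eqref{eq:cutpoly}, observe that it is the dual of \eqref{eq:flowpoly}, and then threshold-round the potentials $p_v$, with the same factor of $2$ arising because each $y_{uv}$ must simultaneously cover edge constraints on both layers.
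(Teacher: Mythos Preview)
Your proof is correct and follows essentially the same route as the paper: both introduce the same auxiliary (``modified'') layered graph with \emph{independent} capacities $\mu_{uv}$ on the two copies of each link, argue that halving any feasible flow there yields a feasible solution to \eqref{eq:flowpoly} (so $F'\le 2F$), invoke classical max-flow min-cut on the auxiliary graph, and then map the resulting $s$-$t'$ cut back to a joint cut of no greater cost. Your write-up is in fact more explicit than the paper's in verifying that the mapped-back set is a valid joint cut, but the underlying idea is identical.
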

\begin{proof}
  In the layered graph, the sum of flows on two copies of a communication link should not exceed the capacity of the link.
  By relaxing the capacity constraints, and restricting that the flow on each copy of the link should not exceed the capacity of the link, we obtain a \emph{modified} layered graph. Since the sum of flows in the two copies of a link is at most twice the link capacity, the capacity constraints in the original graph are satisfied by reducing the flow by half in the modified layered graph. Therefore, the maximum flow in the modified layered graph is at most twice the maximum flow in the original graph.

  The minimum cut in the modified layered graph is the same as the maximum flow in the modified layered graph. The minimum joint cut in the original graph is at most the minimum cut in the modified layered graph, since removing two copies of a link incurs double cost in the modified layered graph and a single cost in the original graph. Therefore, the minimum joint cut in the original graph is at most twice the maximum flow in the original graph.
\end{proof}

The gap is shown to be tight by the example in Fig. \ref{fig6:gap}. In this computing network, each link has capacity 2. Node $v$ is the only computation node with processing capacity 2. The maximum \st flow is 1, since the flow has to traverse link $(s,t)$ twice in order to be first processed and then delivered to $t$. Meanwhile, the minimum \st joint cut is 2.

\begin{figure}[h]
\centering
\includegraphics[width=.35\linewidth]{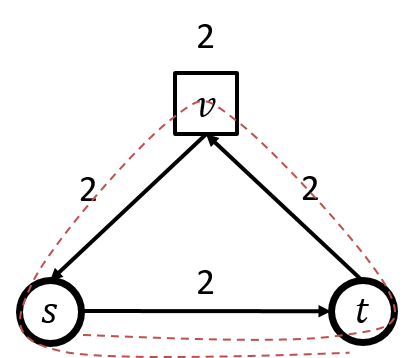}
\caption{Gap between the max flow and min joint cut: max flow $=$ 1, min joint cut $=$ 2.}
\label{fig6:gap}
\end{figure}

Unlike the classical communication network where links in a minimum \st cut are saturated by a maximum \st flow, a computing network may have links and nodes whose capacities are not saturated but still belong to the minimum cut. We next provide examples to support this observation.

\emph{Unsaturated node in minimum cut:} Consider a computing network represented by Fig. \ref{fig6:gap}, where the processing capacity at node $v$ is reduced to 1.5 and the other capacities remain the same. The maximum flow remains 1. The minimum joint cut is node $v$, which has value 1.5. However, only one unit processing capacity at $v$ is utilized by the maximum flow, and 0.5 unit processing capacity remains idle.

\emph{Unsaturated link in minimum cut:} In Fig. \ref{fig6:gaplink}, the maximum flow remains 1. The minimum cut is link $(u,t)$, which has capacity 1.5 and is not saturated by the maximum flow.

\begin{figure}[h]
\centering
\includegraphics[width=.6\linewidth]{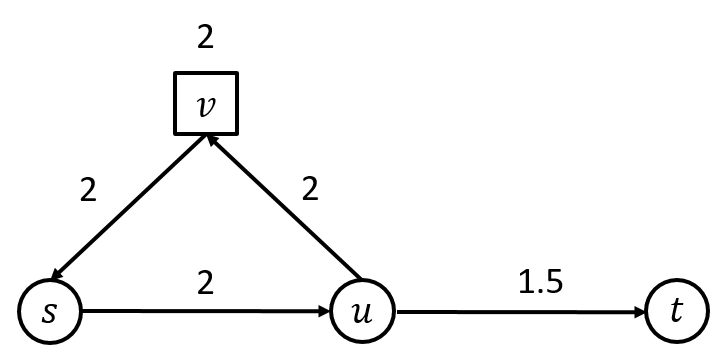}
\caption{Link in min cut may not be saturated by max flow.}
\label{fig6:gaplink}
\end{figure}

\section{Flow interdiction}
\label{sc6:inter}
In this section, we study network flow interdiction problems in a computing network. The objective is to minimize the maximum \st flow by removing communication links and computation resources using a given budget. We first discuss the complexity of flow interdiction problems in a computing network, and then develop mathematical programs and approximation algorithms to compute optimal and near-optimal interdiction strategies.

\subsection{Flow interdiction variants and complexity}
There are two major types of network flow interdiction problems -- binary interdiction and partial interdiction. For binary interdiction, the capacity of an interdicted link or a node is removed in full, at a fixed cost. For partial interdiction, the capacity can be removed by a fraction at a fractional cost.

We start by reviewing the complexity of network flow interdiction problems in a communication network, based on the classical flow network model, which imply the complexity of interdiction problems in a computing network. Suppose that the interdiction cost $c_{uv}$ for a link $(u,v)$ is arbitrary, and is independent of its capacity $\mu_{uv}$. The binary interdiction is NP-hard for a communication network, by a reduction from the knapsack problem \cite{wood1993deterministic,phillips1993network}.  Moreover, the binary interdiction remains NP-hard even if every link has one unit interdiction cost \cite{wood1993deterministic}. The optimal set of interdicted links belong to some \emph{minimal} cut, and the optimal partial interdiction strategy to attack a minimal cut is greedy in the decreasing value of $\mu_{uv}/c_{uv}$. Partial interdiction, on a network with unit link interdiction cost and an integer interdiction budget, reduces to binary interdiction, and is therefore NP-hard \cite{phillips1993network}.

One special case for network interdiction is that the link interdiction cost equals its capacity. The binary interdiction problem remains NP-hard, by a reduction from the subset sum problem. On the other hand, the partial interdiction problem can be solved in polynomial time, and the optimal interdiction is to interdict the links in any minimum cut. The maximum flow is $\max\{C - B,0\}$ after the partial interdiction, where $C$ is the minimum cut value and $B$ is the interdiction budget. The same solution can be extended to the problem where link interdiction cost is proportional to its capacity (i.e., $c_{uv} = \alpha \mu_{uv}$, where  $\alpha$ is identical for all links).

Since the flow interdiction problem in a computing network includes the flow interdiction problem in a communication network as a special case, the interdiction problem in a computing network is NP-hard for binary interdiction, and for partial interdiction with arbitrary costs. Nevertheless, the optimal partial interdiction is non-trivial even if the interdiction cost equals capacity. The optimal interdiction may not be the minimum cut. For example, in Fig. \ref{fig6:gaplink}, for budget $B < 1$, the optimal strategy is to interdict link $(s,u)$, while for budget $B > 1$, the optimal strategy is to interdict link $(u,t)$. The maximum flow under the optimal interdiction is:
\[
  f = \left \{
  \begin{tabular}{lll}
  $1 - 0.5B$,& & $B \leq 1$ \\
  $1.5 - B$,& & $1 < B \leq 1.5$ \\
  $0$,&& $B > 1.5$
  \end{tabular}
  \right.
\]

We further discuss the properties of the optimal partial interdiction when attack cost equals removed capacity. A flow can be decomposed into computation paths. After cycle canceling of the unprocessed flow, and the processed flow, respectively, a flow on a computation path traverses the same link no more than twice, once before processing and once after processing. By removing link capacity, the rate of flow decrease is 0, 0.5, or 1. By removing node capacity, the rate of flow decrease is 0 or 1. The maximum flow after the optimal partial interdiction is a piecewise linear function in the budget.

The function is neither convex nor concave in general, as shown by an example illustrated by Fig. \ref{fig:interdiction}. The maximum flow before interdiction is 2, which is carried by computation pathes $(\{s-w-t\},w)$ and $(\{s-u-v-s-u-t\},v)$. For interdiction budget $B \leq 1$, the optimal strategy is to attack link $(s,w)$. For $1 < B \leq 2$, the optimal strategy is to attack link $(s,u)$ in addition to $(s,w)$. For $2 < B \leq 2.5$, the optimal strategy is to attack link $(u,t)$ in addition to $(s,w)$. The rates of max flow decrease are 1, 0.5, and 1, respectively.
\begin{figure}[h]
\centering
  \begin{subfigure}[b]{0.3\textwidth}
    \centering
    \includegraphics[width=\textwidth]{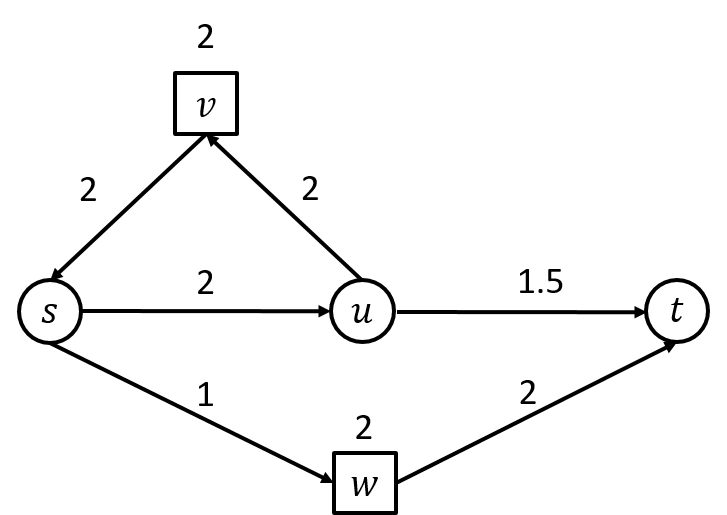}
  \end{subfigure}
  \begin{subfigure}[b]{0.35\textwidth}
    \centering
    \includegraphics[width=\textwidth]{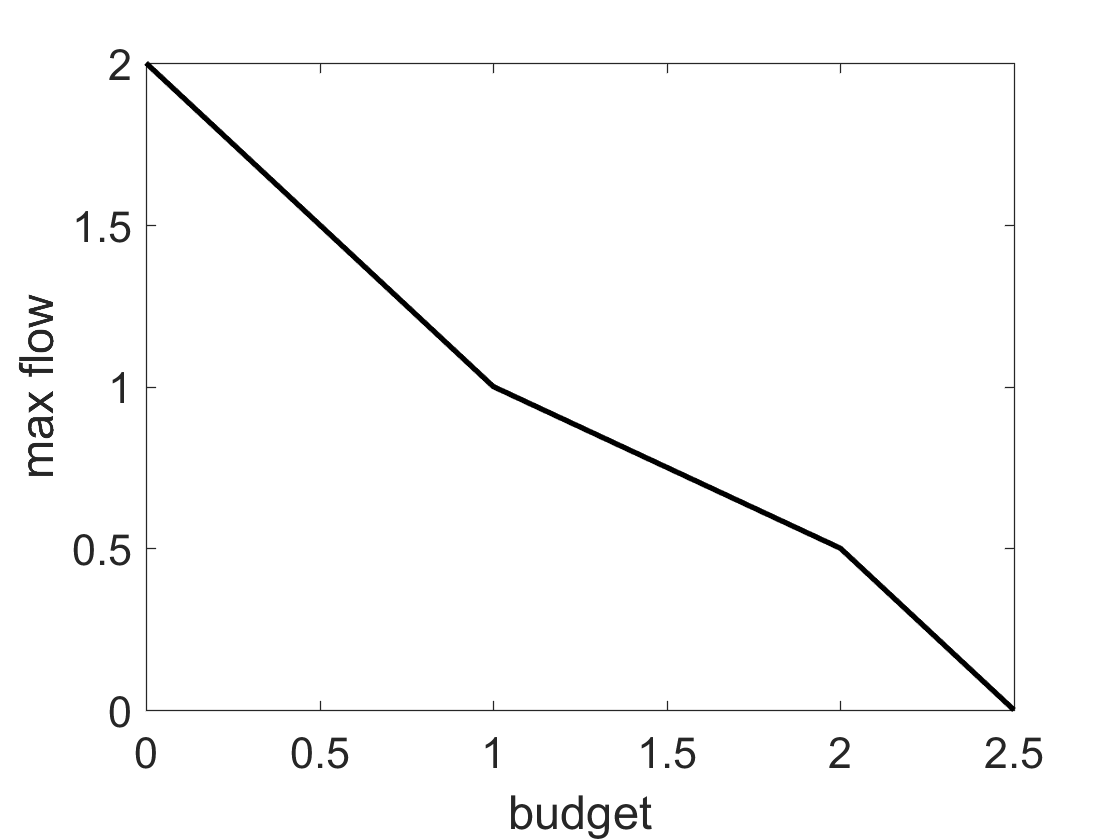}
  \end{subfigure}
  \caption{Example for max flow after optimal interdiction.}\label{fig:interdiction}
\end{figure}


\subsection{Exact solutions}
We develop mathematical programs to compute the optimal interdiction. The key is to transform the minimax problem (i.e., minimizing the maximum flow) to a minimization problem. Using linear programming duality, the maximum flow is equivalent to the minimum cut without integrality constraints. For binary interdiction, let $z_{uv}$ indicate whether link $(u,v)$ is removed, and let $z_w$ indicate whether the computation resource at node $w$ is removed. Let $c_{uv}$ denote the cost of removing link $(u,v)$. Let $c_w$ denote the cost of removing the computation resource at node $w$. Let $B$ denote the interdiction budget. The budget constraint is guaranteed by Eq. \eqref{eq:budget}. The objective \eqref{eq:interdiction} minimizes the maximum flow after interdiction. Informally, $\mu_{uv}\beta_{uv}$ indicates the amortized amount of flow contributed by link $(u,v)$, which is zero either by removing $(u,v)$ (i.e., $z_{uv} = 1$) or if link $(u,v)$ is not in a cut (i.e., $p_v = p_u$ and $p'_v = p'_u$). Similarly, $\mu_w \beta_w$ indicates the amortized flow contributed by computation node $w$. The detailed derivation of this formulation can be found in the Appendix.


\begin{align}
  \min ~~~& ~~~\sum_{(u,v) \in E} \mu_{uv} \beta_{uv} + \sum_{w \in V} \mu_w \beta_w \label{eq:interdiction}\\
  \text{s.t.} ~~~
  & p_v - p_u + \beta_{uv} + z_{uv} \geq 0, ~~~ \forall (u,v) \in E \nonumber \\
  & p_{v'} - p_{u'} + \beta_{uv} + z_{uv} \geq 0, ~~~ \forall (u,v) \in E \nonumber \\
  & -p_w + p_{w'} + \beta_w + z_w \geq 0, ~~\forall w \in V \nonumber\\
  & p_s - p_{t'} \geq 1, \nonumber \\
  & \sum_{(u,v) \in E} c_{uv}z_{uv} + \sum_{w \in V} c_w z_w \leq B, \label{eq:budget} \\
  & 0 \leq \beta_{uv} \leq 1, z_{uv} \in \{0, 1\}, ~~~ \forall (u,v) \in E, \nonumber\\
  & 0 \leq \beta_w \leq 1, z_w \in \{0, 1\}, ~~~\forall w \in V. \nonumber
\end{align}


We next develop a bilinear program to compute the optimal partial interdiction. We use the same variables to represent costs and capacities as in the integer linear program \eqref{eq:interdiction}, except that the variables $z_{uv}$ and $z_w$ now denote the fraction of removed link and node capacities. In Eq. \eqref{eq:flow2}, $\mu_{uv}(1 - z_{uv})$ denotes the remaining transmission capacity of link $(u,v)$, and $\mu_w (1 - z_w)$ denotes the remaining processing capacity at node $w$. The objective is the dual of the maximum flow Eq. \eqref{eq:flowpoly} with reduced capacities after interdiction. Although there is no integral constraints, the bilinear program is difficult to solve since the objective function is non-convex.
\begin{align}
  \min ~~~& ~~~\sum_{(u,v) \in E} \mu_{uv}(1 - z_{uv}) y_{uv} + \sum_{w \in V} \mu_w (1 - z_w) y_w \label{eq:flow2}\\
  \text{s.t.} ~~~
  & p_v - p_u + y_{uv} \geq 0, ~~~ \forall (u,v) \in E, \nonumber \\
  & p_{v'} - p_{u'} + y_{uv} \geq 0, ~~~ \forall (u,v) \in E, \\
  & -p_w + p_{w'} + y_w \geq 0, ~~\forall w \in V, \nonumber \\
  & p_s - p_{t'} \geq 1,  \nonumber \\
  & \sum_{(u,v) \in E} c_{uv}z_{uv} + \sum_{w \in V} c_w z_w \leq B, \nonumber \\
  & 0 \leq y_{uv} \leq 1, 0 \leq z_{uv} \leq 1, ~~~ \forall (u,v) \in E, \nonumber \\
  & 0 \leq y_w \leq 1, 0 \leq z_w \leq 1, ~~\forall w \in V.\nonumber
\end{align}

\subsection{Approximation algorithms}
We develop approximation algorithms based on the sensitivity analysis of the linear program \eqref{eq:flowpoly}. Namely, we study the change of the maximum flow under the changes of link capacity $\mu_{uv}$ and node capacity $\mu_w$. Our algorithms are in contrast with previous algorithms for classical flow interdiction based on minimizing the min-cut \cite{phillips1993network, wood1993deterministic, mccormick2014discrete}. Instead, our algorithms directly work with the max-flow, and the performance is not deteriorated by the non-zero gap between max-flow and min-cut in a computing network.

We start by considering the case where the attack cost equals the removed capacity. The \emph{shadow price} associated with a constraint in a linear program is the rate of change of the objective for one unit change of the right-hand side value of the constraint. Therefore, the shadow price associated with constraint \eqref{eq:cap0} represents the rate of max-flow decrease for each unit processing capacity decrease at node $w \in V$. The shadow price associated with constraint \eqref{eq:cap1} represents the rate of max-flow decrease for each unit transmission capacity decrease at link $(u,v) \in E$. Although the shadow price is a local property for a small change of the right-hand side of the constraint, it gives a conservative estimate of the impact on reducing the max-flow by reducing the capacity of a node or a link. The reason is that the rate of max-flow decrease is a monotone non-decreasing function as the capacity decreases, due to Lemma \ref{th:concave}.

\begin{lem}\label{th:concave}
  The maximum \st flow $F^*(\bm{\mu})$ given by linear program \eqref{eq:flowpoly} is a concave function of $\bm{\mu}$, where $\bm{\mu}$ is a vector representing link and node capacities.
\end{lem}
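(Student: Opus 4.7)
The plan is to use the standard LP sensitivity observation that for a maximization LP in which the right-hand side vector depends linearly on a parameter, the optimal value is a concave function of that parameter. In our setting, the capacity vector $\bm{\mu}$ enters linearly on the right-hand side of constraints \eqref{eq:cap0} and \eqref{eq:cap1}, while the flow conservation constraints \eqref{eq:node0} and the non-negativity constraints are independent of $\bm{\mu}$. So the conditions for the standard concavity argument apply directly.

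Concretely, I would fix two capacity vectors $\bm{\mu}_1$ and $\bm{\mu}_2$ and let $\bm{f}_1, \bm{f}_2$ be optimal flows under these capacities with values $F^*(\bm{\mu}_1)$ and $F^*(\bm{\mu}_2)$. For any $\lambda \in [0,1]$, define the convex combinations $\bm{f}_\lambda = \lambda \bm{f}_1 + (1-\lambda) \bm{f}_2$ and $\bm{\mu}_\lambda = \lambda \bm{\mu}_1 + (1-\lambda) \bm{\mu}_2$. The plan is to verify that $\bm{f}_\lambda$ is feasible for the LP with capacity vector $\bm{\mu}_\lambda$: flow conservation \eqref{eq:node0} holds because it is a linear equality satisfied by both $\bm{f}_1$ and $\bm{f}_2$; the node capacity constraint \eqref{eq:cap0} becomes $\lambda f_{ww'}^{(1)} + (1-\lambda) f_{ww'}^{(2)} \leq \lambda \mu_w^{(1)} + (1-\lambda)\mu_w^{(2)}$, which follows by adding the two individual feasibility inequalities; the link capacity constraint \eqref{eq:cap1} follows analogously; and non-negativity is preserved under convex combinations with $\lambda \in [0,1]$.

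Once feasibility is established, the objective value attained by $\bm{f}_\lambda$ at capacity vector $\bm{\mu}_\lambda$ is the linear functional $f_{t's}$ evaluated at $\bm{f}_\lambda$, which equals $\lambda F^*(\bm{\mu}_1) + (1-\lambda) F^*(\bm{\mu}_2)$. Since the LP is a maximization, the optimal value at $\bm{\mu}_\lambda$ is at least the value of any feasible solution, yielding
\[
F^*(\bm{\mu}_\lambda) \geq \lambda F^*(\bm{\mu}_1) + (1-\lambda) F^*(\bm{\mu}_2),
\]
which is precisely concavity of $F^*$ in $\bm{\mu}$.

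There is no real technical obstacle here; the only thing to take care of is bookkeeping, namely ensuring that every constraint in \eqref{eq:flowpoly} either does not involve $\bm{\mu}$ or has $\bm{\mu}$ appearing as a linear right-hand side, so that convex combinations of feasible flows remain feasible at the corresponding convex combination of capacities. If desired, one could alternatively invoke the general fact that the value function of a parametric LP with linear right-hand side perturbations is piecewise linear and concave, but the direct feasibility argument above is self-contained and matches the style of the paper.
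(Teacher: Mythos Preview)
Your proposal is correct and essentially identical to the paper's own proof: the paper also takes optimal flows at two capacity vectors, forms their convex combination, observes it is feasible for the convexly combined capacities, and concludes concavity from the resulting lower bound on $F^*$. Your write-up is slightly more explicit in checking each constraint, but the argument is the same.
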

\begin{proof}
  Let $F^*(\bm{\mu}^i)$ be the maximum \st flow given a capacity vector $\bm{\mu}^i$, $\forall i \in \{1,2\}$. Let $\bm{f}^*(\bm{\mu}^i)$ denote the flow vector on each link that support the maximum flow, given capacity $\bm{\mu}^i$. Consider a capacity vector $\bm{\mu}^3 = \alpha \bm{\mu}^1 + (1 - \alpha) \bm{\mu}^2$, where $0 \leq \alpha \leq 1$. The flow vector $\bm{f}(\bm{\mu}^3) = \alpha \bm{f}^*(\bm{\mu}^1) + (1 - \alpha) \bm{f}^*(\bm{\mu}^2)$ is a feasible solution to the linear program \eqref{eq:flowpoly}, which supports $F(\bm{\mu}^3) = \alpha F^*(\bm{\mu}^1) + (1 - \alpha) F^*(\bm{\mu}^2)$ flow from $s$ to $t$. The maximum flow $F^*(\bm{\mu}^3)$ is at least $F(\bm{\mu}^3)$. Therefore, $F^*(\bm{\mu})$ is a concave function of $\bm{\mu}$.
\end{proof}

We propose Algorithm \ref{alg:greedy} that greedily computes the attack using the shadow price information in linear program \eqref{eq:flowpoly}.

\begin{algorithm}[h]
\caption{Greedy binary interdiction using budget $B$ on a network where interdiction cost equals removed capacity}
\begin{enumerate}
\item Solve linear program \eqref{eq:flowpoly} and obtain shadow prices for constraints \eqref{eq:cap0} and \eqref{eq:cap1}.
\item Choose a link or a node whose capacity is no more than $B$ and is associated with a constraint that has the largest shallow price. Denote the capacity by $\mu^*$.
\item Update $B$ by $B - \mu^*$. Repeat Step 1 until $B \leq 0$.
\end{enumerate}
\label{alg:greedy}
\end{algorithm}

Algorithm \ref{alg:greedy} can be naturally extended to the partial interdiction case. In Step 2, a link or a node associated with the constraint that has the largest shadow price is chosen. All its capacity is removed if the remaining budget is sufficient, while partial capacity is removed otherwise. The other steps of the algorithm remain the same.

We then develop Algorithm \ref{alg:greedycost} that computes an attack strategy when the interdiction cost is arbitrary, and not necessarily equal to the removed capacity. The algorithm can also be extended to solve the partial interdiction problem, in the same manner as Algorithm \ref{alg:greedy}.

\begin{algorithm}[h]
\caption{Greedy binary interdiction using budget $B$ on a network with arbitrary interdiction cost}
\begin{enumerate}
\item Solve linear program \eqref{eq:flowpoly} and obtain shadow prices $\bm{q}$ for constraints \eqref{eq:cap0} and \eqref{eq:cap1}.
\item Choose a link or a node whose capacity is no more than $B$ and is associated with a constraint that has the largest $q_i \mu_i / c_i$, where $\mu_i$ is the capacity and $c_i$ is the interdiction cost.
\item Update $B$ by $B - \mu_i$. Repeat Step 1 until $B \leq 0$.
\end{enumerate}
\label{alg:greedycost}
\end{algorithm}

The drawback of Algorithm \ref{alg:greedycost} is that the shadow price is only dependent on the capacity, but not cost. Under arbitrary interdiction cost, it is possible that a cut has a small capacity (i.e., bottleneck for traffic) but a high interdiction cost, in which case it is wise to attack a cut that has a lower interdiction cost but may allow more traffic to go through. The shadow prices associated with links in larger cuts are always zero, since reducing their capacities by a small amount would not reduce the maximum flow.

To overcome this difficulty, we develop a \emph{cost-aware greedy algorithm}. The algorithm is based on the linear program \eqref{eq:flowpoly}, where the capacities $\bm{\mu}$ in the right-hand side of the constraints are replaced by interdiction costs $\bm{c}$. The remaining steps are identical to Algorithm \ref{alg:greedycost}. The reason of using $\bm{c}$ as the new capacities is that the linear program serves as an approximation of the minimum-cost cut by relaxing the integral constraints.



\section{Numerical results}
\label{sc6:num}
In this section, we provide numerical examples to illustrate the applications of our proposed metrics and algorithms to study computing network robustness. First, we study the robustness based on the Abilene network topology in Fig. \ref{fig6:map}, which has 11 nodes and 14 links. Since we study directed graphs throughout the paper, we consider each edge in the figure as bidirectional links. In the last part of the section, we compare the accuracy and running time of the algorithms on CenturyLink (Level 3) network in U.S., which has 170 nodes and 230 links.

\begin{figure}[h]
\centering
\includegraphics[width=.6\linewidth]{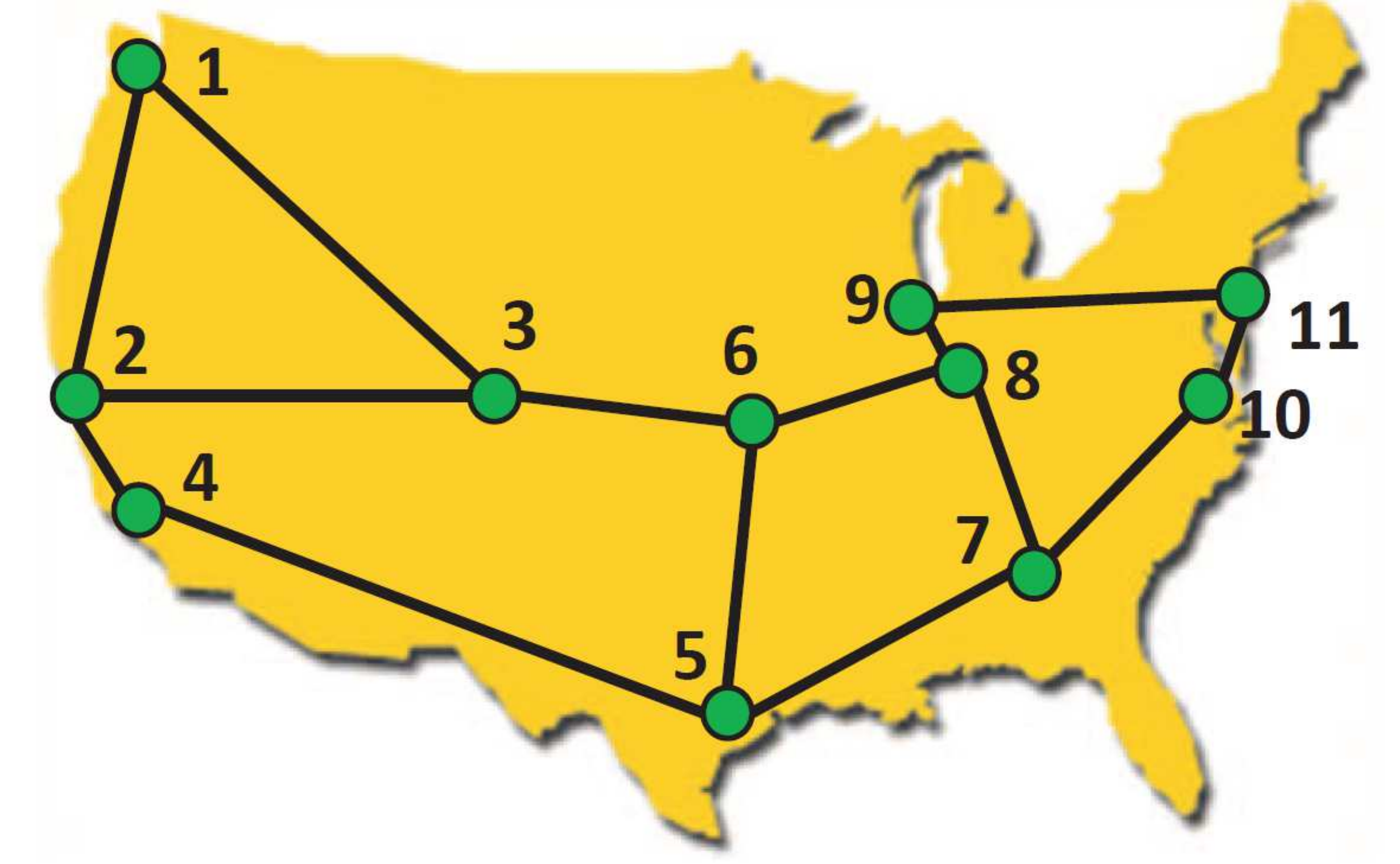}
\caption{Abilene network topology.}
\label{fig6:map}
\end{figure}

\subsection{Max-flow and min-cuts}
The maximum flow equals the minimum computation cut between an \st pair if computation resource is the bottleneck to support a network flow. Suppose that each directed link has transmission capacity 1, and that each of nodes 6 and 11 has processing capacity 0.5. The maximum flow between each pair of nodes is 1, which matches the value of minimum computation cut (i.e., nodes 6 and 11).

There may be a non-zero gap between the maximum flow and the minimum communication cut even if communication resource is the bottleneck to support a network flow. Suppose that each processing capacity of nodes 6 and 11 is increased to 5. The minimum communication cut for $s=8,t=7$ is 3, while the maximum flow is 2.5. The flow can be decomposed as follows. One unit flow is transmitted through $8-6-5-7$ and processed at 6. One unit flow is transmitted through $8-9-11-10-7$ and processed at 11. Half unit flow is transmitted through $8-7-5-6-8-7$ (or $8-7-10-11-9-8-7$) and processed at 6 (or 11). Part of the flow has to traverse link $8-7$ twice, once before processing and once after processing.

In the above two examples, the minimum joint cut equals the minimum of the pure communication cut and pure computation cut. By setting the processing capacity of nodes 6 and 11 to be 5 and 0.5, respectively, for $s=8,t=7$, the minimum joint cut is 2.5, smaller than both the minimum communication cut 3 and the minimum computation cut 5.5. In this example, the maximum \st flow is 2.25. One feasible decomposition of the flow is one unit flow through $8-6-5-7$ and processed at 6, half unit flow through $8-9-11-10-7$ and processed at 11, half unit flow through $8-9-11-10-7-5-6-8-7$ and processed at 6, and 0.25 unit flow through $8-7-5-6-8-7$ and processed at 6.

\subsection{Flow interdiction}
We then study flow interdiction using randomly generated capacities. For simplicity, the capacity of each link is independently and uniformly chosen from $(0,1)$. The capacity of each node is independently and uniformly chosen from $(0,0.1)$. 

First, we consider the network flow interdiction problem where the cost of interdiction equals the capacity. For $s=1, t=2$, the values of max-flow after optimal binary interdiction (solving the integer-linear program), approximate binary interdiction based on Algorithm \ref{alg:greedy}, and approximate partial interdiction based on an extension of Algorithm \ref{alg:greedy} are presented in Fig. \ref{fig6:int1}. The curve for the optimal binary interdiction is smooth, because computation resource is the bottleneck for the flow from node 1 to node 2 and computation capacity has finer granularity due to the small random number generation range. We observe that the approximate binary interdiction algorithm has good performance. Moreover, the approximate partial interdiction algorithm gives exact solutions, since the slope of the red curve is $-1$ and thus there is a unit max-flow decrease by removing each unit capacity, which is the maximum possible decrease.

For $s=1, t=10$, the values of max-flow after optimal binary interdiction, approximate binary interdiction, and approximate partial interdiction are presented in Fig. \ref{fig6:int2}. The curve for optimal binary interdiction has larger steps, because communication resource is the bottleneck for the flow from node 1 to node 10 and the cost of removing a link is relatively high. The steps in the curve illustrates that the interdiction problem has the same nature as the knapsack problem where the knapsack size represents the interdiction budget and item sizes represent interdiction cost (i.e., link capacity). We observe that Algorithm \ref{alg:greedy} and its extension still have good performance for both binary and partial interdictions. 

\begin{figure}[h]
\centering
    \includegraphics[width=.8\linewidth]{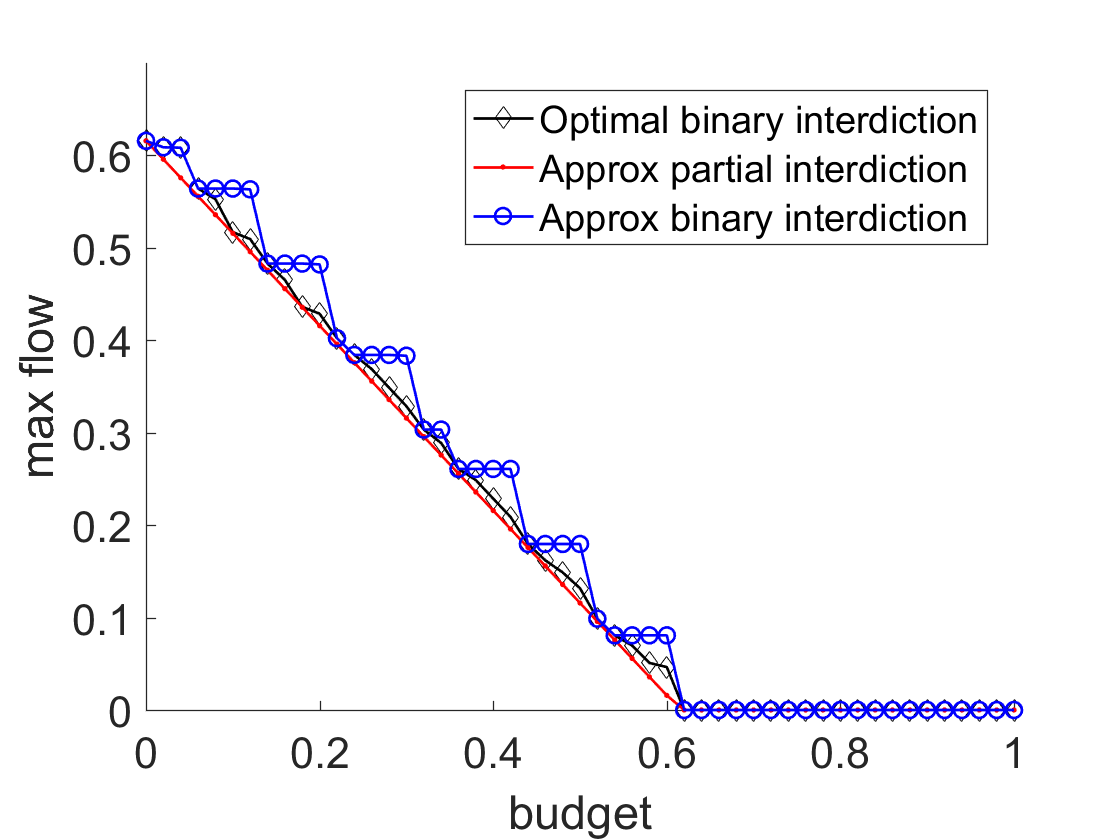}
    \caption{Cost equals capacity, $s=1,t=2$.}
    \label{fig6:int1}
\end{figure}
 \begin{figure}[h]
\centering
    \includegraphics[width=.8\linewidth]{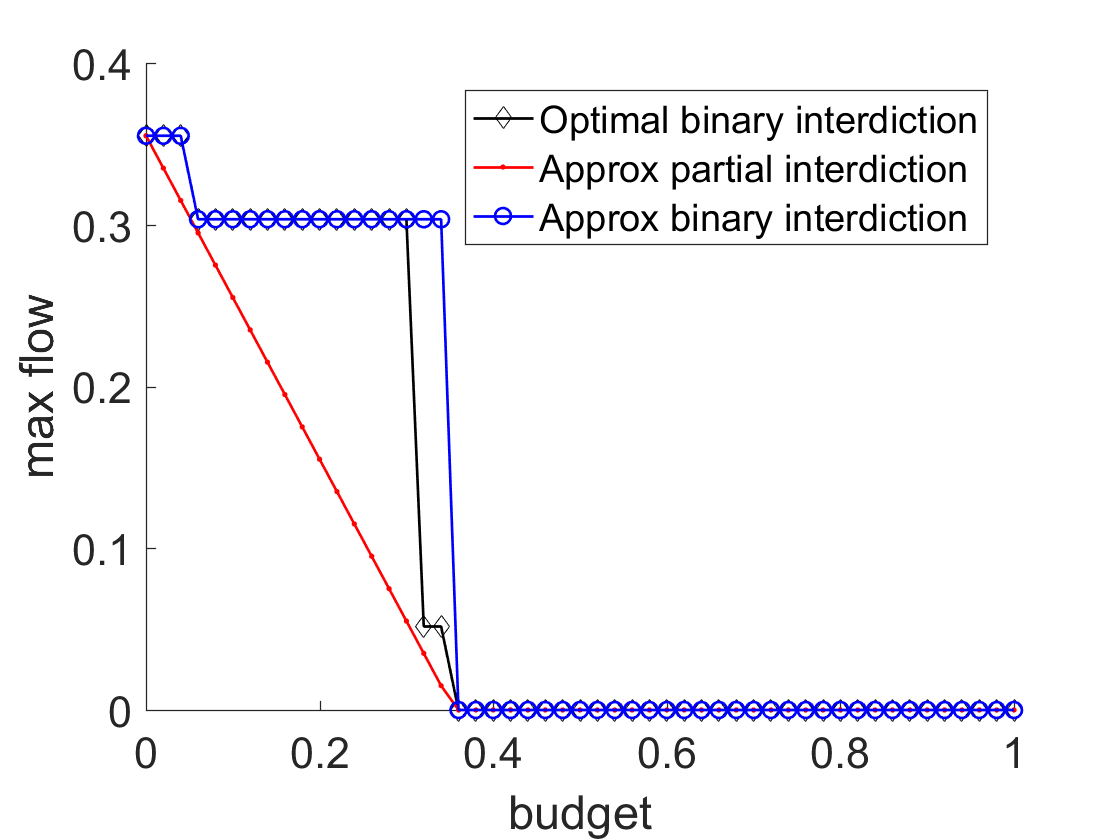}
    \caption{Cost equals capacity, $s=1,t=10$.}
    \label{fig6:int2}
  \end{figure}

Then, we study the performance of the interdiction algorithms under arbitrary interdiction costs. The cost of removing each link is independently and uniformly chosen from $(0,1)$. The cost of removing the computation resource at each node is independently and uniformly chosen from $(0,0.1)$.

For $s=1, t=2$, the values of max-flow after the optimal binary interdiction, approximate binary interdiction using Algorithm \ref{alg:greedycost}, and approximate partial interdiction using an extension from Algorithm \ref{alg:greedycost} are presented in Fig. \ref{fig6:int3}. The curve for optimal binary interdiction is steeper for small budgets compared with Fig. \ref{fig6:int1}, since it is possible to remove large computation resource at small cost due to the independence between cost and capacity. However, the performance of the greedy algorithms deteriorates. It is worth noting that the objective is not monotone under the greedy algorithm. The reason is that, in the greedy algorithm, a saturated link that has a high cost can be ruled out when the budget is small, which allows a larger cut that has a smaller cost to be removed. Similarly, the performance of the algorithms for $s=1,t=10$ is illustrated in Fig. \ref{fig6:int4}.

\begin{figure}[h]
\centering
    \includegraphics[width=.8\linewidth]{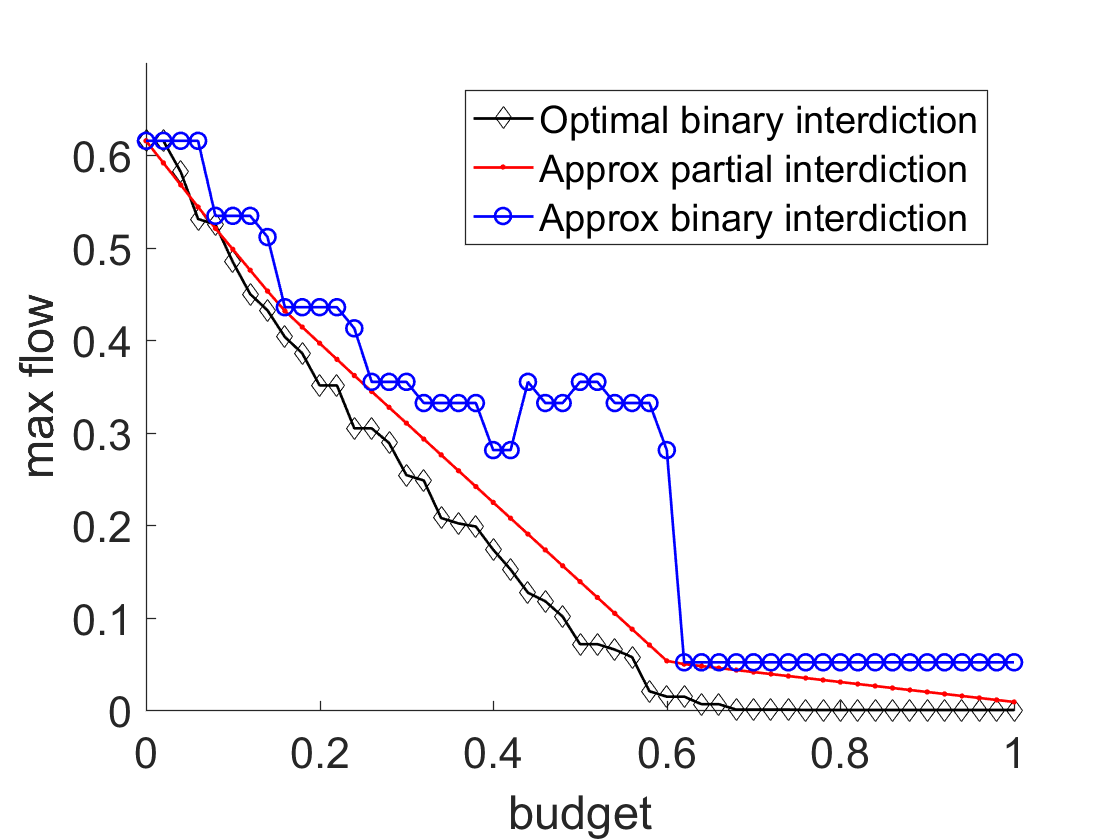}
    \caption{Cost independent of capacity, $s=1,t=2$.}
    \label{fig6:int3}
  \end{figure}
  \begin{figure}[h]
  \centering
    \includegraphics[width=.8\linewidth]{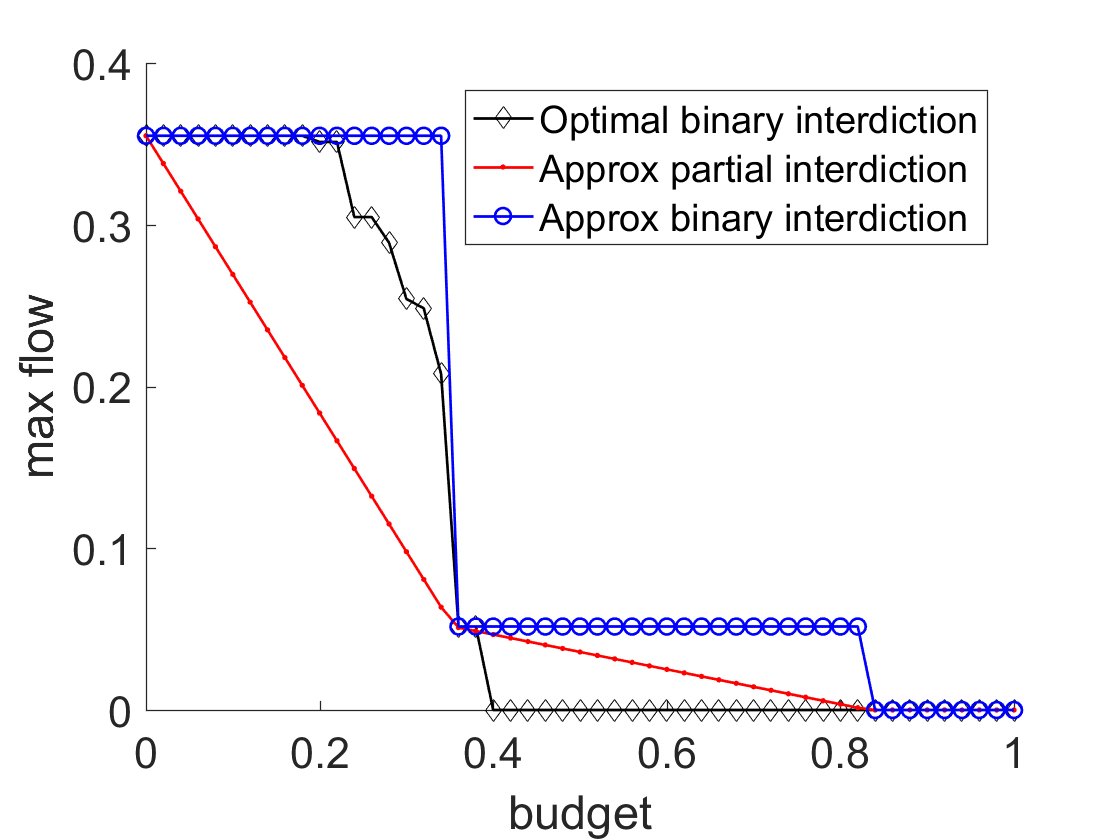}
    \caption{Cost independent of capacity, $s=1,t=10$.}
    \label{fig6:int4}
  \end{figure}

We next compare the greedy algorithm and the cost-aware greedy algorithm on network interdiction with arbitrary cost. We observe in Fig. \ref{fig6:compare} that the cost-aware greedy algorithm has significantly better performance for $s = 1, t = 2$. This can be explained by that a large number of computation nodes are attacked under the optimal strategy, and minimizing the attack cost becomes more important. However, the improvement is not significant for $s = 1, t = 10$, where a small number of links are attacked under the optimal strategy. The cost-aware greedy algorithm rely on the interdiction cost instead of capacity to compute the attacked links, and may attack links whose removal does not have a significant impact on the max-flow.

  \begin{figure}[h]
  \centering
    \includegraphics[width=.8\linewidth]{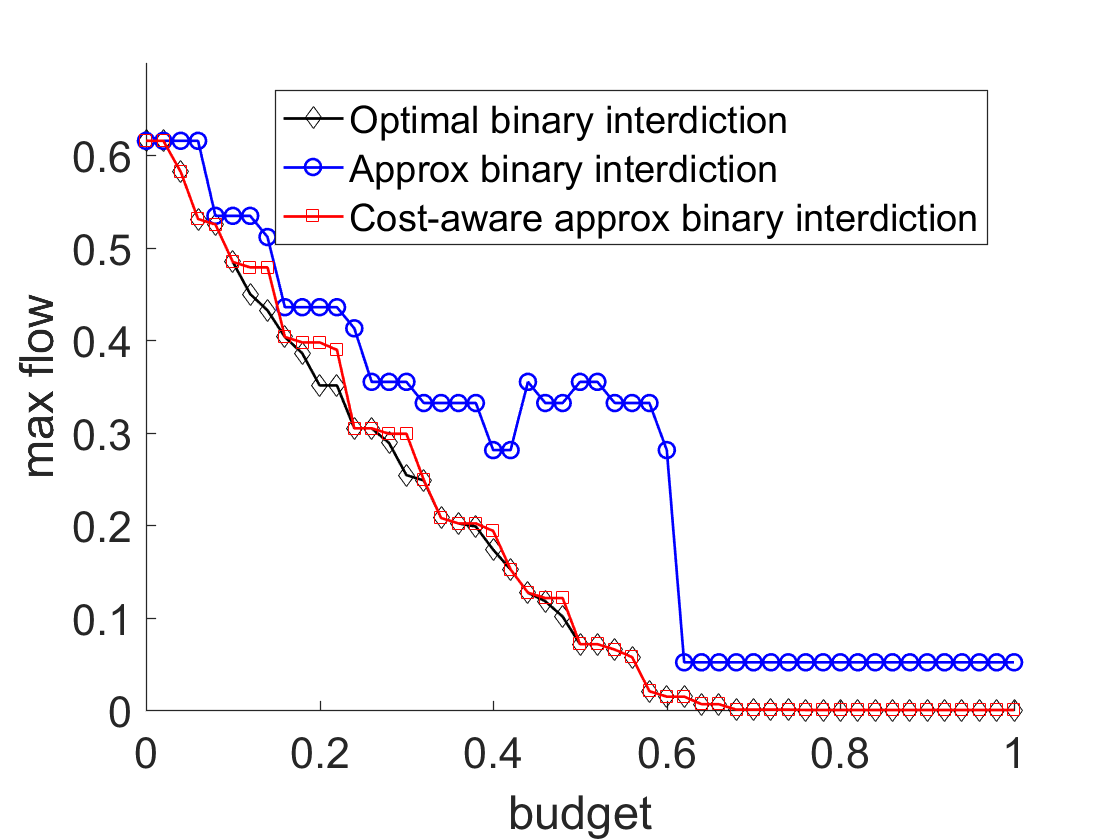}
    \caption{Comparison of greedy and cost-aware greedy algorithms, cost independent of capacity, $s=1,t=2$.}
    \label{fig6:compare}
  \end{figure}

\subsection{Scaling of the algorithms on larger network}
Finally, we study the performance of the algorithms by solving interdiction problems on the CenturyLink network illustrated by Fig. \ref{fig6:maplevel3}. We observe that the running time of solving the integer linear program \eqref{eq:interdiction} is sensitive to input parameters. The exact solution cannot be obtained within a pre-defined time limit for some problem instances. On the other hand, the greedy algorithms have good performance and have much shorter running time. The detailed results are reported below.

\begin{figure}[h]
\centering
\includegraphics[width=.8\linewidth]{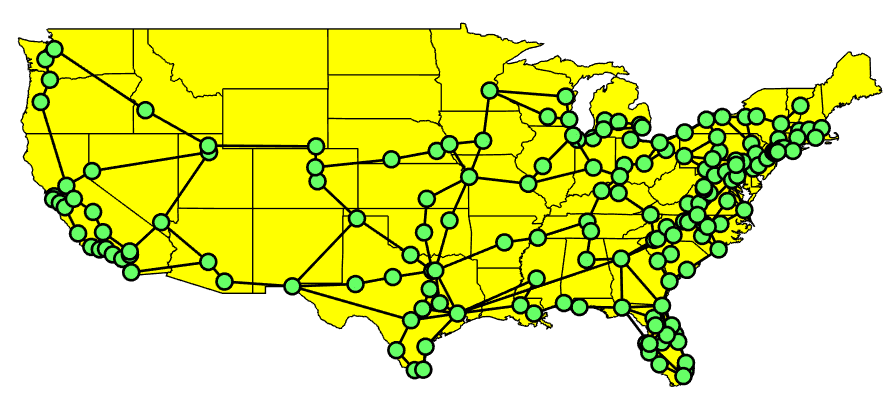}
\caption{CenturyLink (Level 3) network topology \cite{centuraylink}.}
\label{fig6:maplevel3}
\end{figure}

Each edge in Fig. \ref{fig6:maplevel3} represents bidirectional links. The capacity of a link is independently and uniformly chosen from $(0,10)$. The capacity of a node is independently and uniformly chosen from $(0,0.1)$. We first study the case where the interdiction cost equals removed capacity. By randomly choosing ten \st pairs, and using interdiction budget $B \in \{1,2,3,4,5,6\}$, the greedy interdiction computed by Algorithm \ref{alg:greedy} can be obtained in 5 seconds for every scenario. The integer linear program \eqref{eq:interdiction} fails to output an optimal solution within 10 minutes for two \st pairs. Among the scenarios where the optimal solutions are obtained, the running time ranges from 1 second to 4 minutes. The maximum flow after greedy interdiction is on average $7.7\%$ higher than the maximum flow after the optimal interdiction.

For arbitrary interdiction cost, we assume that the cost of removing each link is independently and uniformly chosen from $(0,10)$, and that the cost of removing the computation resource at each node is independently and uniformly chosen from $(0,0.1)$. Among ten randomly chosen \st pairs, the integer linear program \eqref{eq:interdiction} fails to output an optimal solution within 10 minutes for nine \st pairs. The running time of the greedy Algorithm \ref{alg:greedycost} and its cost-aware variant ranges from 1 to 40 seconds. The cost-aware greedy algorithm outperforms the greedy algorithm in 45 out of 60 scenarios, and yields $26\%$ lower maximum flow on average after interdiction.

\section{Conclusion}
\label{sc6:conclusion}
We studied the robustness of a distributed computing network where traffic flows require communication and computation resources to be transmitted and processed. We defined cut metrics to evaluate network robustness under the failures of communication and computation resources. We developed algorithms to evaluate the max-flow and the min-cuts, and showed a non-zero gap between them. Moreover, we developed algorithms for flow interdiction by removing communication and computation resources within a given budget.

\section*{Appendix}

\subsection{Complexity}
\begin{proof}[Proof of Lemma \ref{th:comm}]
  We first prove that obtaining the minimum \st communication cut is NP-hard if there are two computation nodes, by a reduction from exact cover by 3-sets. The reduction follows a similar proof in \cite{yannakakis1983cutting} that shows multicut is NP-hard.

  The exact cover by 3-sets problems is as follows. Given a set $X$ of $3q$ elements, and a collection $C$ of 3-element subsets of $X$, is there a subset $K \subseteq C$, such that every element in $X$ appears in exactly one member of $K$?

  We construct a graph from an instance of the exact cover by 3-sets problem. For each 3-set $c_i \in C$, there is a path $s_1 \rightarrow u_i \rightarrow v_i \rightarrow t_1$ from $s_1$ to $t_1$. The capacities of links $(s_1, u_i), (u_i,v_i), (v_i,t_1)$ are $k, 2, 1$, respectively. For each element $x \in X$, there is a path from $s_2$ to $t_2$. The path contains an edge $(u_i,v_i)$ if the 3-set $s_i$ contains $x$. All the other edges in the path from $s_2$ to $t_2$ have capacity $k$, except the edges $(u_i,v_i)$.

  Finally, the source node $s$ is connected to each of $s_1$ and $s_2$ through a link of capacity $k$. Each of the two nodes $t_1$ and $t_2$ is connected to the destination $t$ through a link of capacity $k$. The only two computation nodes are $s_2$ and $t_1$.

  Suppose the links adjacent to $s$ and $t$ are not removed. In order for a computation path to connect $s$ and $t$, either $s_1$ is connected to $t_1$, or $s_2$ is connected to both $t_1$ and $t_2$. If there exists an exact cover $K \subseteq C$ for $X$, a cut $S_c$ can be constructed as follows. The edge $(u_i,v_i)$ is in the cut if $s_i \in K$. The edge $(v_j,t_1)$ is in the cut if $s_j \notin K$. The value of the cut $S_c$ is $2q + (m - q) = m + q$, where $m = |C|$. This is the minimum cut that separates $t_1$ from $s_1$, and $\{t_1, t_2\}$ from $s_2$, for $k \geq 2m$. Therefore, $S_c$ is the minimum communication cut that disconnect all computation paths from $s$ to $t$.

  To conclude, the minimum communication \st cut is $m+q$ if and only if there exist exact cover by 3-sets for $X$. The reduction can be done in polynomial time, since there are $O(q+m)$ edges and vertices. The computation of the minimum communication \st cut is NP-hard.

  We illustrate the reduction using an example. Consider an exact cover by 3-sets problem where $X = \{1,2,3,4,5,6\}$, $C = \{c_1 = \{1,2,3\}, c_2 = \{1,2,4\}, c_3 = \{3,5,6\}\}$. In this example, $m = 3, q = 2$. There exist an exact cover $K = \{c_2, c_3 \}$ for $X$.
  The corresponding computing network is shown by Fig. \ref{fig6:np}. The path $s_1 \rightarrow u_i \rightarrow v_i \rightarrow t_1$ corresponds to the 3-set $c_i$, $\forall i \in \{1,2,3\}$. The path $s_2 \rightarrow u_1 \rightarrow v_1 \rightarrow u_2 \rightarrow v_2 \rightarrow t_2$ corresponds to elements 1 and 2 that appear in $c_1$ and $c_2$. The path $s_2 \rightarrow u_1 \rightarrow v_1 \rightarrow u_3 \rightarrow v_3 \rightarrow t_2$ corresponds to element 3 that appears in $c_1$ and $c_3$. The path $s_2 \rightarrow u_2 \rightarrow v_2 \rightarrow t_2$ corresponds to element 4 that appears in $c_2$. The path $s_2 \rightarrow u_3 \rightarrow v_3 \rightarrow t_2$ corresponds to elements 5 and 6 that appear in $c_3$. The thick edges each have capacity $k$. The numbers adjacent to the other edges indicate their capacities. The red edges $\{(u_2,v_2),(u_3,v_3),(v_1,t_1)\}$ illustrate the minimum computation cut. The value of the minimum computation cut is 5 = $m+q$.
\begin{figure}[h]
\centering
\includegraphics[width=.5\linewidth]{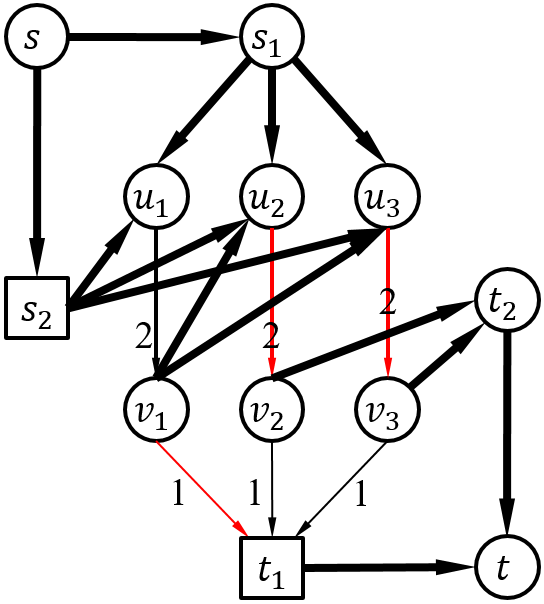}
\caption{Reduction from exact cover by 3-sets to minimum communication cut.}
\label{fig6:np}
\end{figure}
\end{proof}

\subsection{Correctness of the integer linear program for flow interdiction}
We prove the correctness of the integer linear program formulation for flow interdiction. The dual of the maximum flow (linear program \eqref{eq:flowpoly}) is equivalent to the minimum cut (integer program \eqref{eq:cutpoly}) without integrality constraints. Let $z_{uv}$ indicate whether link $(u,v)$ is removed. Let $z_w$ indicate whether the computation resource at node $w$ is removed. The maximum flow after removing links where $z_{uv}=1$ and computation resources at nodes where $z_w = 1$ is represented by Eq. \eqref{eq:flow2}. The mathematical program \eqref{eq:flow2} computes the maximum flow after the optimal interdiction with budget $B$.

\begin{align}
  \min ~~~& ~~~\sum_{(u,v) \in E} \mu_{uv}(1 - z_{uv}) y_{uv} + \sum_{w \in V} \mu_w (1 - z_w) y_w \label{eq:flow2}\\
  \text{s.t.} ~~~
  & p_v - p_u + y_{uv} \geq 0, ~~~ \forall (u,v) \in E, \nonumber \\
  & p_{v'} - p_{u'} + y_{uv} \geq 0, ~~~ \forall (u,v) \in E, \\
  & -p_w + p_{w'} + y_w \geq 0, ~~\forall w \in V, \nonumber \\
  & p_s - p_{t'} \geq 1,  \nonumber \\
  & \sum_{(u,v) \in E} c_{uv}z_{uv} + \sum_{w \in V} c_w z_w \leq B, \nonumber \\
  & 0 \leq y_{uv} \leq 1, z_{uv} \in \{0, 1\}, ~~~ \forall (u,v) \in E, \nonumber \\
  & 0 \leq y_w \leq 1, z_w \in \{0,1\}, ~~\forall w \in V.\nonumber
\end{align}

Since $z_{uv}$ and $z_v$ are binary, the objective can be equivalently represented by Eq. \eqref{eq:flow3}, by adding constraints Eqs. \eqref{eq:beta1}, \eqref{eq:beta2}, \eqref{eq:beta3}, \eqref{eq:beta4}. To see this, note that if $z_{uv} = 0$, $\mu_{uv} \beta_{uv} \geq \mu_{uv} y_{uv}$. In the optimal solution to the integer linear program \eqref{eq:flow3}, $\mu_{uv} \beta^*_{uv} = \mu_{uv} y^*_{uv}$, since $\mu_{uv} \geq 0$. If $z_{uv} = 1$, $y_{uv} - z_{uv} \leq 0$, and $\mu_{uv} \beta^*_{uv} = 0$ in the optimal solution. In both cases, $\beta^*_{uv} \leq 1$. Therefore, the objective $(1 - z_{uv}) y_{uv}$ can be transformed to $\mu_{uv} \beta_{uv}$. Similarly, the objective $(1 - z_w) y_w$ can be transformed to $\mu_w \beta_w$.  The objective Eq. \eqref{eq:flow3} exactly matches the objective Eq. \eqref{eq:flow2}.


\begin{align}
  \min ~~~& ~~~\sum_{(u,v) \in E} \mu_{uv} \beta_{uv} + \sum_{w \in V} \mu_w \beta_w \label{eq:flow3}\\
  \text{s.t.} ~~~
  & p_v - p_u + y_{uv} \geq 0, ~~~ \forall (u,v) \in E, \nonumber \\
  & p_{v'} - p_{u'} + y_{uv} \geq 0, ~~~ \forall (u,v) \in E, \nonumber \\
  & -p_w + p_{w'} + y_w \geq 0, ~~\forall w \in V, \nonumber \\
  & p_s - p_{t'} \geq 1,  \nonumber \\
  & \sum_{(u,v) \in E} c_{uv}z_{uv} + \sum_{w \in V} c_w z_w \leq B, \nonumber \\
  & \beta_{uv} \geq y_{uv} - z_{uv}, ~~~ \forall (u,v) \in E, \label{eq:beta1} \\
  & \beta_{w} \geq y_w - z_w, ~~\forall v \in V, \label{eq:beta2} \\
  & 0 \leq \beta_{uv} \leq 1, ~~~ \forall (u,v) \in E, \label{eq:beta3}\\
  & 0 \leq \beta_w \leq 1, ~~~\forall w \in V, \label{eq:beta4} \\
  & 0 \leq y_{uv} \leq 1, z_{uv} \in \{0, 1\}, ~~~ \forall (u,v) \in E, \nonumber \\
  & 0 \leq y_w \leq 1, z_w \in \{0,1\}, ~~\forall w \in V.\nonumber
\end{align}

Finally, we show that the integer linear program \eqref{eq:flow3} has the same optimal solution, if the constraints \eqref{eq:beta1} and \eqref{eq:beta2} are replaced by equality constraints. Suppose that in an optimal solution, $y^*_{uv} - z^*_{uv} \geq 0$. Then $\beta^*_{uv} = y^*_{uv} - z^*_{uv}$ holds in the optimal solution. If $y^*_{uv} - z^*_{uv} < 0$, $y^*_{uv}$ can be increased to $z^*_{uv}$ without violating any constraint and achieves the same cost, such that $\beta^*_{uv} = 0$. Therefore, the constraint \eqref{eq:beta1} can be replaced by an equality constraint. The same analysis holds for replacing constraint \eqref{eq:beta2} by an equality constraint. By replacing $y_{uv} = \beta_{uv} + z_{uv}$ in all the constraints, we obtain the integer linear programming formulation \eqref{eq:interdiction}.

\begin{rmk}
The network flow interdiction problem in the classical communication network was formulated as an integer linear program in \cite{wood1993deterministic}. We follow a similar approach that use linear programming duality to transform a minimax problem to a minimization problem. The key difference is that the classical minimum cut polytope is integral, and thus it is possible to restrict values of $p_v, y_{uv}$ to be binary in \cite{wood1993deterministic}. However, the polytope of Integer program \eqref{eq:cutpoly} is not integral. Thus, $p_v, y_{uv}$ may take fractional values, which complicates our analysis and makes it non-trivial to extend this formulation to study partial interdiction problems.
\end{rmk}

\bibliographystyle{IEEEtran}
\bibliography{computation}

\begin{thebibliography}{10}
\providecommand{\url}[1]{#1}
\csname url@samestyle\endcsname
\providecommand{\newblock}{\relax}
\providecommand{\bibinfo}[2]{#2}
\providecommand{\BIBentrySTDinterwordspacing}{\spaceskip=0pt\relax}
\providecommand{\BIBentryALTinterwordstretchfactor}{4}
\providecommand{\BIBentryALTinterwordspacing}{\spaceskip=\fontdimen2\font plus
\BIBentryALTinterwordstretchfactor\fontdimen3\font minus
  \fontdimen4\font\relax}
\providecommand{\BIBforeignlanguage}[2]{{%
\expandafter\ifx\csname l@#1\endcsname\relax
\typeout{** WARNING: IEEEtran.bst: No hyphenation pattern has been}%
\typeout{** loaded for the language `#1'. Using the pattern for}%
\typeout{** the default language instead.}%
\else
\language=\csname l@#1\endcsname
\fi
#2}}
\providecommand{\BIBdecl}{\relax}
\BIBdecl

\bibitem{aws}
\BIBentryALTinterwordspacing
E.~Weise, ``Massive {Amazon} cloud service outage disrupts sites,'' Feb. 2017.
  [Online]. Available: \url{http://usat.ly/2m41LPz}
\BIBentrySTDinterwordspacing

\bibitem{earth}
\BIBentryALTinterwordspacing
W.~Qiu, ``Submarine cables cut after {Taiwan} earthquake in {Dec} 2006,'' Mar.
  2011. [Online]. Available: \url{https://www.submarinenetworks.com}
\BIBentrySTDinterwordspacing

\bibitem{azodolmolky2013cloud}
S.~Azodolmolky, P.~Wieder, and R.~Yahyapour, ``Cloud computing networking:
  Challenges and opportunities for innovations,'' \emph{IEEE Communications
  Magazine}, vol.~51, no.~7, pp. 54--62, 2013.

\bibitem{gill2011understanding}
P.~Gill, N.~Jain, and N.~Nagappan, ``Understanding network failures in data
  centers: measurement, analysis, and implications,'' in \emph{ACM SIGCOMM
  Computer Communication Review}, vol.~41, no.~4.\hskip 1em plus 0.5em minus
  0.4em\relax ACM, 2011, pp. 350--361.

\bibitem{charikar2018multi}
M.~Charikar, Y.~Naamad, J.~Rexford, and X.~K. Zou, ``Multi-commodity flow with
  in-network processing,'' \emph{arXiv preprint arXiv:1802.09118}, 2018.

\bibitem{mohammed2017failover}
B.~Mohammed, M.~Kiran, K.~M. Maiyama, M.~M. Kamala, and I.-U. Awan, ``Failover
  strategy for fault tolerance in cloud computing environment,''
  \emph{Software: Practice and Experience}, vol.~47, no.~9, pp. 1243--1274,
  2017.

\bibitem{malik2011adaptive}
S.~Malik and F.~Huet, ``Adaptive fault tolerance in real time cloud
  computing,'' in \emph{2011 IEEE World Congress on Services (SERVICES)}.\hskip
  1em plus 0.5em minus 0.4em\relax IEEE, 2011, pp. 280--287.

\bibitem{cheraghlou2016survey}
M.~N. Cheraghlou, A.~Khadem-Zadeh, and M.~Haghparast, ``A survey of fault
  tolerance architecture in cloud computing,'' \emph{Journal of Network and
  Computer Applications}, vol.~61, pp. 81--92, 2016.

\bibitem{rao2012cloud}
N.~S. Rao, S.~W. Poole, F.~He, J.~Zhuang, C.~Y. Ma, and D.~K. Yau, ``Cloud
  computing infrastructure robustness: A game theory approach,'' in
  \emph{Computing, Networking and Communications (ICNC), 2012 International
  Conference on}.\hskip 1em plus 0.5em minus 0.4em\relax IEEE, 2012, pp.
  34--38.

\bibitem{chauvel2015evaluating}
F.~Chauvel, H.~Song, N.~Ferry, and F.~Fleurey, ``Evaluating robustness of
  cloud-based systems,'' \emph{Journal of Cloud Computing}, vol.~4, no.~1,
  p.~18, 2015.

\bibitem{wood1993deterministic}
R.~K. Wood, ``Deterministic network interdiction,'' \emph{Mathematical and
  Computer Modelling}, vol.~17, no.~2, pp. 1--18, 1993.

\bibitem{chestnut2017hardness}
S.~R. Chestnut and R.~Zenklusen, ``Hardness and approximation for network flow
  interdiction,'' \emph{Networks}, vol.~69, no.~4, pp. 378--387, 2017.

\bibitem{burch2003decomposition}
C.~Burch, R.~Carr, S.~Krumke, M.~Marathe, C.~Phillips, and E.~Sundberg, ``A
  decomposition-based pseudoapproximation algorithm for network flow
  inhibition,'' in \emph{Network Interdiction and Stochastic Integer
  Programming}.\hskip 1em plus 0.5em minus 0.4em\relax Springer, 2003, pp.
  51--68.

\bibitem{mccormick2014discrete}
S.~T. McCormick, G.~Oriolo, and B.~Peis, ``Discrete newton algorithms for
  budgeted network problems,'' Technical report, RWTH Aachen, Tech. Rep., 2014.

\bibitem{phillips1993network}
C.~A. Phillips, ``The network inhibition problem,'' in \emph{Proceedings of the
  twenty-fifth annual ACM symposium on Theory of computing}.\hskip 1em plus
  0.5em minus 0.4em\relax ACM, 1993, pp. 776--785.

\bibitem{zenklusen2010network}
R.~Zenklusen, ``Network flow interdiction on planar graphs,'' \emph{Discrete
  Applied Mathematics}, vol. 158, no.~13, pp. 1441--1455, 2010.

\bibitem{ford1956maximal}
L.~R. Ford and D.~R. Fulkerson, ``Maximal flow through a network,''
  \emph{Canadian journal of Mathematics}, vol.~8, no.~3, pp. 399--404, 1956.

\bibitem{coudert2007shared}
D.~Coudert, P.~Datta, S.~P{\'e}rennes, H.~Rivano, and M.-E. Voge, ``Shared risk
  resource group complexity and approximability issues,'' \emph{Parallel
  Processing Letters}, vol.~17, no.~02, pp. 169--184, 2007.

\bibitem{lee2011cross}
K.~Lee, E.~Modiano, and H.-W. Lee, ``Cross-layer survivability in {WDM}-based
  networks,'' \emph{IEEE/ACM Transactions on Networking}, vol.~19, no.~4, pp.
  1000--1013, 2011.

\bibitem{hu2003diverse}
J.~Q. Hu, ``Diverse routing in optical mesh networks,'' \emph{IEEE Transactions
  on Communications}, vol.~51, no.~3, pp. 489--494, 2003.

\bibitem{neumayer2011assessing}
S.~Neumayer, G.~Zussman, R.~Cohen, and E.~Modiano, ``Assessing the
  vulnerability of the fiber infrastructure to disasters,'' \emph{IEEE/ACM
  Transactions on Networking}, vol.~19, no.~6, pp. 1610--1623, 2011.

\bibitem{agarwal2013resilience}
P.~K. Agarwal, A.~Efrat, S.~K. Ganjugunte, D.~Hay, S.~Sankararaman, and
  G.~Zussman, ``The resilience of wdm networks to probabilistic geographical
  failures,'' \emph{IEEE/ACM Transactions on Networking}, vol.~21, no.~5, pp.
  1525--1538, 2013.

\bibitem{msongaleli2016disaster}
D.~L. Msongaleli, F.~Dikbiyik, M.~Zukerman, and B.~Mukherjee, ``Disaster-aware
  submarine fiber-optic cable deployment for mesh networks,'' \emph{Journal of
  Lightwave Technology}, vol.~34, no.~18, pp. 4293--4303, 2016.

\bibitem{neumayer2015geographic}
S.~Neumayer, A.~Efrat, and E.~Modiano, ``Geographic max-flow and min-cut under
  a circular disk failure model,'' \emph{Computer Networks}, vol.~77, pp.
  117--127, 2015.

\bibitem{zhang2019robustness}
J.~Zhang, H.-W. Lee, and E.~Modiano, ``On the robustness of distributed
  computing networks,'' in \emph{Proceedings of the International Conference on
  the Design of Reliable Communication Networks (DRCN)}.\hskip 1em plus 0.5em
  minus 0.4em\relax IEEE, 2019, pp. 122--129.

\bibitem{centuraylink}
\BIBentryALTinterwordspacing
{CenturyLink (Level 3)}, ``Network map.'' [Online]. Available:
  \url{http://www.level3.com/~/media/files/maps/en-network-services-level-3-network-map.ashx}
\BIBentrySTDinterwordspacing

\bibitem{yannakakis1983cutting}
M.~Yannakakis, P.~C. Kanellakis, S.~S. Cosmadakis, and C.~H. Papadimitriou,
  ``Cutting and partitioning a graph after a fixed pattern,'' in
  \emph{International Colloquium on Automata, Languages, and
  Programming}.\hskip 1em plus 0.5em minus 0.4em\relax Springer, 1983, pp.
  712--722.

\end{thebibliography}
\end{document}